\DeclareMathOperator{\argmin}{arg\,min}
\newtheorem{lemma}{Lemma}
  \newtheorem{theorem}{Theorem}
  \newtheorem{prop}{Proposition}
  \newtheorem{cor}{Corollary}[theorem]
\theoremstyle{remark}
\newtheorem{remark}{Remark}
\title{Caching under Content Freshness Constraints}
\author{Pawan Poojary, Sharayu Moharir and Krishna Jagannathan
\thanks{P.~Poojary and K.~Jagannathan are with the Department of Electrical Engineering, IIT Madras, Chennai 600036, India. Email: \{ee15s025,krishnaj\}@ee.iitm.ac.in}%
\thanks{S.~Moharir is with the Department of Electrical Engineering, IIT Bombay, Mumbai 400076, India. Email: sharayum@ee.iitb.ac.in}%
\thanks{Some preliminary results contained in this work will be presented as a poster at COMSNETS 2018, Bengaluru, India.}
}
\begin{document}

\maketitle
\vspace{-11mm}
\begin{abstract}
Several real-time delay-sensitive applications pose varying degrees of freshness demands on the requested content. The performance of cache replacement policies that are agnostic to these demands is likely to be sub-optimal. Motivated by this concern, in this paper, we study caching policies under a request arrival process which incorporates user freshness demands. We consider the performance metric to be the steady-state cache hit probability. We first provide a universal upper bound on the performance of any caching policy. We then analytically obtain the content-wise hit-rates for the Least Popular (LP) policy and provide sufficient conditions for the asymptotic optimality of cache performance under this policy. Next, we obtain an accurate approximation for the LRU hit-rates in the regime of large content population. To this end, we map the characteristic time of a content in the LRU policy to the classical Coupon Collector's Problem and show that it sharply concentrates around its mean. Further, we develop modified versions of these policies which eject cache redundancies present in the form of stale contents. Finally, we propose a new policy which outperforms the above policies by explicitly using freshness specifications of user requests to prioritize among the cached contents. We corroborate our analytical insights with extensive simulations.
\end{abstract}

\begin{IEEEkeywords}
Sensor networks, freshness, hit-rate, Zipf's law, characteristic time, Coupon Collector's Problem, concentration bound, LRU.
\end{IEEEkeywords}

\section{Introduction} 
\label{sec1}

\IEEEPARstart{T}{he} past few years have seen a tremendous increase in real-time delay-sensitive applications that are tasked with monitoring processes, optimizing process parameters to achieve specific objectives and data logging for learning and prediction \cite{yick}. In this scenario, the network constitutes a sensor grid where several smart devices connect wirelessly to a single Access Point (AP) which serves these applications with requested contents.


A key characteristic of real-time applications is the varying degree of constraints that they impose on the freshness of the contents delivered by the network \cite{shenker}. The freshness of a content refers to the amount of time elapsed since it was last fetched from the back-end server. These freshness constraints are determined by a number of factors, including the nature of the data that a content represents, and the application for which it is used. For example, a data-logging application requesting data from a sensor grid that measures the pollution levels in a city can tolerate delays of the order of several minutes. However, an emergency alarm system requesting data from sensors that measure the concentration of noxious gases in a chemical factory cannot tolerate delays beyond a few seconds. Therefore, when dealing with delay-sensitve applications, the network should be designed to take into account such varied freshness demands.

Distributed content caching has been employed in networks on a large scale owing to its well-known advantages \cite{melamed}. Caching contents close to the end-users in a network serves to minimize the load on the network back-end by transferring it towards the network end-nodes. In our scenario, this translates to the contents being cached in the AP. The reduction in the back-end traffic saves precious network bandwidth and decreases content-delivery latency. However, owing to physical resource constraints, caches are capable of storing only a small fraction of the entire content catalogue. This is further exacerbated by the ever-increasing content population. Hence, optimal cache replacement strategies aimed at minimizing the number of deferred requests must be devised. Traditionally, caching strategies have mainly focussed on caching relatively popular contents.


In light of the above needs, it appears reasonable that existing caching policies should be designed by taking both content popularities and user freshness demands into consideration while taking the content storage and replacement decisions. Motivated by these considerations, we first analyze traditional cache replacement policies under user freshness demands. Secondly, we devise new policies that account for the relative content popularities as well as the user freshness constraints.

\vspace{-3mm}
\subsection{Related Work} \label{Related_Work}
Several analytical models have been developed to estimate hit-rates of popular caching policies \cite{Breslau}, \cite{King}, \cite{Serpanos2, Serpanos3, Flajolet, Dan:1990:AAL:98460.98525, Che2002, Fricker, Wolman, Gomaa, Mookerjee:2002:ALR:767824.769479}. Among these, there exists a class of policies that prioritize contents by estimating their popularity and storing the most popular contents in the cache. Several methods have been proposed in the literature to estimate the content popularity \cite{Good_Turing}, \cite{McAllester}, \cite{valiant}. However, all of them obtain the exact popularity distribution asymptotically and therefore behave identically in the steady state. We refer to policies in this class as the Least Popular (LP) policy. Perfect Least Frequently Used (Perfect-LFU) is the most basic among these policies \cite{Serpanos2}, \cite{Serpanos3}. LP policy provides the best performance in steady-state but at the cost of a large memory (equal to number of content types). On the other hand, the Least Recently Used (LRU) policy which although sub-optimal requires much smaller memory (equal to cache-size) and hence is widely-used.

Exact Markov chain analysis for LRU and First-In-First-Out (FIFO) policies was provided in \cite{King}, resulting in hit-rate expressions with exponential complexity whereas \cite{Flajolet} provided simpler expressions for the same. Dan \emph{et al.} \cite{Dan:1990:AAL:98460.98525} obtained  approximate hit-rates with much lower computational complexity, and showed that LRU outperforms FIFO. Che \emph{et al.} \cite{Che2002} proposed a simple approach to accurately estimate hit-rates for LRU policy under certain approximations and came up with the notion of a ``characteristic time" of the cache. The accuracy of the hit-rates and scope of the approximation were further investigated in \cite{Fricker}. A study by \cite{KamISIT} proposed a dynamic model for content requests that depends on both the freshness and popularity of contents. However, none of the above studies consider content requests having freshness demands which have to be met by the content delivery system.

Studies in \cite{Wolman, Gomaa, Mookerjee:2002:ALR:767824.769479} address the notion of freshness by considering that an object entering the cache has a limited lifetime beyond which it is expires. The object lifetimes are considered to be exponentially distributed in \cite{Wolman} and \cite{Gomaa}. In contrast to \cite{Wolman} which considers infinite cache, we obtain hit-rates under finite caching resources. The study by Mookerjee \emph{et al.} \cite{Mookerjee:2002:ALR:767824.769479} on LRU policy considers requests to a fixed number of documents that get updated periodically with a certain frequency. Whereas in our model, a content fetch from the source at any instant gives the latest version of the requested object \emph{i.e.}, the source produces real-time content. The main difference between our analytical model and the models in \cite{Wolman}, \cite{Gomaa}, \cite{Mookerjee:2002:ALR:767824.769479} is that in our case, it is the content request stream which demands that a content not older than a particular age be served. Also, it is upto the caching policy to decide how long a content resides in the cache.

\vspace{-3mm}
\subsection{Our Contributions}  
We modify the well-known Independent Reference Model (IRM) of request arrivals to incorporate freshness demands of users \cite{Breslau}, \cite{King}, \cite{Podlipnig:2003:SWC:954339.954341}. Under this traffic model, we evaluate cache performance in terms of its steady-state hit probability which is defined as the probability that a request will be served by the cache. We highlight the contributions of our work as follows:
\begin{itemize}
\item[$(i)$] We derive a universal upper bound on the performance of all caching policies subject to user freshness constraints.


\item[$(ii)$] We derive content-wise hit-rates for the LP policy with freshness constraints. For Zipf distributed content requests, we prove that as the number of contents ($n$) increases, the LP policy is asymptotically optimal, \emph{i.e.}, it attains upper bound performance as long as the cache-size increases with $n$.

\item[$(iii)$] We conduct an asymptotic analysis of the LRU policy, and obtain an approximation for its content-wise hit-rates. Our analysis is based on relating the \emph{characteristic time} \cite{Che2002} of a content in the LRU policy to the classical Coupon Collector's Problem \cite{ferrante}. Specifically, we show that this characteristic time enjoys tight concentration about its mean for large $n$, which then allows us to obtain an accurate approximation to the LRU hit-rates. We believe that our asymptotic analysis of the LRU policy is of independent interest, regardless of any freshness considerations.


\item[$(iv)$] We consider a class of policies that explicitly use the freshness specifications to eject \emph{stale} contents from the cache. In particular, we develop improved versions of the LP and LRU policies. Further, we propose a new policy which accounts for freshness specifications as well as content popularities in its cache replacement strategy. We conduct extensive simulations to corroborate our claim that this policy outperforms the other policies considered.
\end{itemize}

\vspace{-5mm}
\subsection{Organisation}
The remainder of the paper is organized as follows. Firstly, we describe our system model in Section \ref{system_model}. Metrics for evaluation of cache performance are defined in Section \ref{metric}. In Section \ref{agnostic}, we characterize the class of policies that are unaware of user freshness demands and in particular analyze the performance of LP and LRU policies. Analytical arguments to justify the approximations made while evaluating the LRU policy performance have been provided in Section \ref{no_name}. The class of policies that evict stale contents from the cache are studied in Section \ref{aware}. We provide proofs for the main results obtained in this paper in Section \ref{main_results}. In Section \ref{simulation}, we present the simulation results. Finally, the conclusions obtained from our analysis are presented in Section \ref{conclusion}.

\section{System Model}  \label{system_model}
In our model, we consider a front-end server with a finite cache. It serves incoming client requests with contents fetched from a back-end.

\begin{figure}
\centering
\scalebox{0.85} 
{
\vspace{2mm}
{
\begin{tikzpicture}[scale=0.35]
\draw (0,0) -- (-10.25,0);
\draw (0,0) -- (0,5);
\draw (0,5) -- (6.5,5);
\draw (6.5,5) -- (6.5,-5);
\draw (6.5,-5) -- (0,-5);
\draw (0,-5) -- (0,0);
\draw (1.5,-3.25) -- (1.5,3.25);
\draw (1.5,3.25) -- (5,3.25);
\draw (5,3.25) -- (5,-3.25);
\draw (5,-3.25) -- (1.5,-3.25);
\draw (-3.25,0.25) -- (-2.75,-0.25);
\draw (-3.25,-0.25) -- (-2.75,0.25);
\draw (-5.25,0.25) -- (-4.75,-0.25);
\draw (-5.25,-0.25) -- (-4.75,0.25);
\draw (-7.25,0.25) -- (-6.75,-0.25);
\draw (-7.25,-0.25) -- (-6.75,0.25);
\draw (-9.25,0.25) -- (-8.75,-0.25);
\draw (-9.25,-0.25) -- (-8.75,0.25);
\draw (1.5,2.25) -- (5,2.25);
\draw (1.5,1.25) -- (5,1.25);
\draw [loosely dotted,very thick] (3.25,1) -- (3.25,-0.5);
\draw [dashed] (6.5,0) -- (13,3.6);
\draw [dashed] (6.5,0) -- (13,-3.6);
  \node at (15,1.5) [rectangle,draw] (v100){};
  \node at (20,1.5) [rectangle,draw] (v100){};
  \node at (18,0) [rectangle,draw] (v100){};
  \node at (14,-1.5) [rectangle,draw] (v100){};
  \node at (21,-1.5) [rectangle,draw] (v100){};
\node at (18,0) [cloud, draw,cloud puffs=10,cloud puff arc=120, cloud ignores aspect, minimum width=5cm, minimum height=3.5cm] {};
%
\node at (3.25,-1.75) {$m$};
\node at (3.25,-4.25) {Cache};
\node at (3.25,-6.25) {Access Point};
\node at (3.25,-8) {(Front-end Server)};
\node at (18,-6.25) {Sensor Network};
\node at (18,-8) {(Back-end)};
\node at (-7,2) {Request arrival process};
\node at (-6.75,-1.25) {\begin{small}$X_{k}$\end{small}};
\node at (-4.40,-1.25) {\begin{small}$X_{k+1}$\end{small}};
\node at (16,2) {\begin{small}$1$\end{small}};
\node at (21,2) {\begin{small}$2$\end{small}};
\node at (22,-1) {\begin{small}$n$\end{small}};

\end{tikzpicture}
} }
\caption{Content delivery to incoming requests through a front-end server equipped with finite caching resources.}  \label{system_model}
\end{figure}
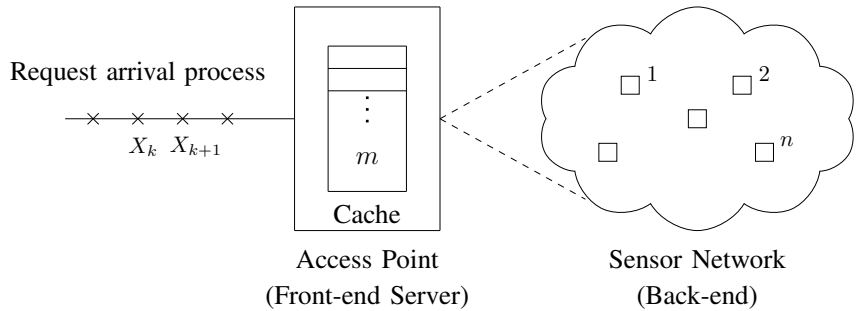


\subsection{Server and Storage Model}
The system consists of an Access Point (AP) equipped with a cache of length \textit{m}. The AP fetches contents from a population of \textit{n} content-generating objects indexed by $\lbrace 1,2,\ldots,n \rbrace$ and stores them in the cache in order to serve future requests. The object population could constitute a sensor grid where several smart-devices connect wirelessly to the AP. Each content fetched from the objects is of unit size and occupies unit space in the cache. Owing to practical resource constraints, typically $ m \ll n. $ The AP acts as a front-end server and is assigned the task of serving incoming content requests either from its cache or by fetching contents from the back-end.  

\vspace{-5mm}
\subsection{Request Arrival Model}
We adopt the Independent Reference Model (IRM) which is known to be a well suited abstraction for independent requests generated from a large population of users \cite{Breslau}, \cite{King}, \cite{Podlipnig:2003:SWC:954339.954341}. The request arrival process is modeled as an infinite sequence of independent and identically distributed random variables $\lbrace X_1,X_2,\ldots \rbrace$, where $X_k=i$ denotes that the $k^{th}$ request is for content type $i \in \lbrace1,2,\ldots,n \rbrace$. We denote popularity of content \textit{i} by $ p_i \triangleq \mathbb{P}(X_k=i) $. Typically, in a time slot $k$, let $X_k=i$. This request is accompanied by a freshness specification $F(i)$, which denotes the maximum acceptable \emph{age} of content $i$ that can still serve the request. When the AP requests a content $i$ from the back-end, object $i$ produces a new content (\emph{i.e.}, latest data sensed) and transmits it to the AP. Hence, a content currently fetched by the AP has zero age.

\begin{remark}
The \emph{age} of a cached content refers to the number of time slots elapsed since it was fetched from the back-end. \label{rem_age}
\end{remark}

\begin{remark}
Several studies, using empirical data from traces of web proxy caches, have shown that content request distributions follow the Zipf's law with varying exponents \cite{Breslau}, \cite{cunha}. Hence, for the asymptotic analysis of caching policies and numerical simulations, we assume Zipf distributed requests. Under the Zipf's law, $ p_i \propto i^{-\beta}$, where $\beta > 0$ is known as the Zipf parameter.
\label{rem_zipf}
\end{remark}

\vspace{-5mm}
\subsection{Service Model}
The server serves the incoming requests in the following manner:
\begin{itemize}
\item If the content corresponding to a request is not present in the cache, a cache \emph{miss} occurs.
\item If the requested content is present in the cache, its age is compared with the freshness specification of the request. If this age exceeds the freshness specification, the cached content is considered \textit{stale}; a cache miss occurs. On the other hand, if this age is within the freshness limits, the content is considered \textit{fresh}; a cache \emph{hit} occurs and the request is served.
\item In the event of a cache miss, the server has to fetch the content from the back-end and serve the request. At this point, the server has to decide whether or not to retain the fetched content in the cache.
\end{itemize}

\begin{remark}
\normalfont A content fetch is initiated if and only if there is a cache miss.
\label{rem1}
\end{remark}

\begin{remark}
\normalfont A request is always served with a \textit{fresh} content.
\label{rem2}
\end{remark}

\vspace{-5mm}
\section{System Performance Metrics}   \label{metric}
The hit-rate or hit-ratio of object $i$ at time $t$, $H_\mathcal{A}(i,t)$ is the ratio of the number of cache hits to the total number of requests for object $i$ received till time $t$ under a policy $\mathcal{A} \in \mathbb{A}$. Here, $\mathbb{A}$ is the set of all replacement policies. The steady-state hit-rate $h_\mathcal{A}(i)= \underset{t \rightarrow \infty}{\lim}H_\mathcal{A}(i,t)$\footnote{The limit exists since the caching process constitutes an ergodic Markov chain, that is, it is irreducible and aperiodic.} is the steady-state probability that a \textit{fresh} content $i$ is present in the cache, simply referred to as the \textit{hit rate} of object $i$ under policy $\mathcal{A}$. This follows from the fact that the caching process is ergodic\footnote{A process whose time average converges to its ensemble average.}.
\par However, the overall cache performance is quantified by the total probability of a cache hit in the steady state, simply referred to as the \emph{hit probability}. Let $ Y_k^i$ denote the event that a fresh content $i$ is present in the cache in the $k^{\text{th}}$ time-slot. Then, the hit probability in a discrete time slot for a policy $\mathcal{A} \in \mathbb{A}$ is given by
\begin{align}
\mathbb{P}_\mathcal{A}(\text{hit}) & \triangleq \sum_{i=1}^{n} \mathbb{P} \big( \textit{cache hit} \: \vert \: X_k=i \big) \: \mathbb{P} (  X_k=i ) \nonumber \\
& \overset{(a)}{=} \sum_{i=1}^{n} \mathbb{P} \big(  Y_k^i \: \vert \: X_k=i \big) \: \mathbb{P}( X_k=i ) \overset{(b)}{=} \sum_{i=1}^{n} \mathbb{P} ( Y_k^i ) \: \mathbb{P}( X_k=i ) = \sum_{i=1}^{n} h_\mathcal{A}(i) \, p_{i}. 
\end{align}
Step ($a$) holds due to the fact that, under the condition that content $i$ is requested, a cache hit is analogous to a fresh content $i$ being present in the cache. Step ($b$) follows from the fact that, in the $k^{\text{th}}$ time-slot, the presence of a fresh content in the cache is independent of any ensuing request for it.

\section{Performance of traditional freshness-agnostic caching policies} \label{agnostic}  
A replacement policy decides the strategy used to replace an existing content in the cache with the fetched content. We revisit the class of traditional caching policies $ \mathbb{T} $ studied in the literature under the constraints imposed by the freshness demands. These policies are inherently indifferent to the presence of \textit{stale} data in the cache. 

Let $\lbrace X_1,X_2,\ldots \rbrace$ be the request stream, $ C_i $ refer to content $i$, \emph{Age}$(C_i)$ refer to the number of time-slots for which content $i$ has resided in the cache and $ F(i)$ be its freshness specification, $ i\in \lbrace1,2,\ldots,n\rbrace $. Then, a policy $\mathcal{T} \in \mathbb{T}$ is implemented as per Algorithm \ref{firstalgo}.
 
\begin{algorithm} 
\caption{Freshness-agnostic caching.} \label{firstalgo}
\begin{algorithmic}[1]
\State $k \gets 1$.
\Loop
\If { $C_{X_k} \in \text{cache}$ }
     \If {$Age(C_{X_k}) \leq F(X_k)$}
         \State serve the request \Comment{cache hit}
     \Else    
         \State Fetch content type $C_{X_k}$ from back-end 
         \State Replace old copy of $C_{X_k}$ \Comment{cache miss}
     \EndIf    
\Else
    \State Fetch content type $C_{X_k}$ from back-end
    \State serve the request  \Comment{cache miss}
    \If {cache is full}
        \State Implement cache replacement policy $ \mathcal{T} $ \label{Algo1step14}
         \State {\textbf{if} $C_{X_k} \in \text{cache}$ \textbf{then} $Age(C_{X_k}) \gets 1$ }
     \Else   
         \State Place $C_{X_k}$ in cache. $Age(C_{X_k}) \gets 1$
     \EndIf     
\EndIf     
\For { $ C_i \in cache$ } 
       \State $Age(C_i) \gets Age(C_i)+1 $
\EndFor
\State $k \gets k+1 $
\EndLoop
\end{algorithmic}
\end{algorithm} 

\subsection{A Universal Upper Bound on the performance of caching policies}
We consider the ideal scenario where the cache size is infinite. Then, the duration that a content $i$ in the cache remains \textit{fresh} is limited only by its \textit{freshness} specification $F(i)$. The following theorem evaluates hit-rates under this scenario.

\begin{theorem}
For a system consisting of $n$ objects with infinite cache-size, and freshness specification $ F(i)$ for object $i$, under policy $\mathcal{A} \in \mathbb{A}$ where $\mathbb{A}$ is the set of all replacement policies,
\begin{align}
\hspace{-2.5mm} h_{\mathcal{A}}(i)=\frac{(F(i)-1)p_{i}}{1+(F(i)-1)p_{i}} \; ; \; i\in \lbrace1,2,\ldots,n \rbrace, \; \forall \; \mathcal{A} \in \mathbb{A}.  \label{hitInf}
\end{align}
\label{thmInf}
\end{theorem}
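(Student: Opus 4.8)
The plan is to exploit the single structural simplification afforded by an infinite cache: since space is never a constraint, no content is ever \emph{evicted}, so any content that has been fetched once remains in the cache for all future time. Consequently the replacement policy $\mathcal{A}$ is never actually invoked, and whether a request for content $i$ results in a hit is governed \emph{solely} by freshness, not by $\mathcal{A}$. This is precisely why \eqref{hitInf} holds uniformly over every $\mathcal{A}\in\mathbb{A}$, and it reduces the problem to tracking, for a single content $i$, only the time elapsed since its most recent fetch.

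Next I would set up a renewal argument on the request subsequence for content $i$. Under the IRM, content $i$ is requested in each slot independently with probability $p_i$, so successive requests for $i$ are separated by i.i.d.\ geometric gaps. I declare a \emph{cycle} to begin at each fetch of $i$ (equivalently, each miss for $i$). Reading off the age/freshness convention of the service model and Algorithm~\ref{firstalgo}, a content fetched at slot $t_0$ stays fresh for exactly the next $F(i)-1$ slots: a request for $i$ in any of those slots is served from the cache (a hit, and \emph{not} a re-fetch, so the age keeps increasing), whereas the first request for $i$ after this window finds the content stale, triggers a re-fetch, and thereby opens the next cycle. Because the request stream is i.i.d., these cycles are i.i.d.\ with finite mean length (the geometric gap has finite mean), so the renewal--reward theorem applies.

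The computation is then immediate. Let $H$ denote the number of hits in a cycle; this is the number of the $F(i)-1$ fresh slots that happen to carry a request for $i$, a sum of independent $\mathrm{Bernoulli}(p_i)$ indicators, so $\mathbb{E}[H]=(F(i)-1)p_i$. Each cycle contains exactly one miss (the terminating re-fetch), hence $H+1$ requests for $i$ in total. By renewal--reward, the steady-state hit-rate --- the long-run fraction of requests for $i$ that are hits, which by the argument of Section~\ref{metric} equals $h_{\mathcal{A}}(i)=\mathbb{P}(Y_k^i)$ --- is
\[
h_{\mathcal{A}}(i)=\frac{\mathbb{E}[H]}{\mathbb{E}[H+1]}=\frac{(F(i)-1)p_i}{1+(F(i)-1)p_i},
\]
as claimed.

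I expect the only delicate point to be fixing the exact length of the fresh window, namely arguing that it is $F(i)-1$ rather than $F(i)$ slots; this hinges on the precise bookkeeping of the age update applied to the freshly fetched copy at the end of its fetch slot. A convenient cross-check, which I would include if space permits, is to instead model the age of content $i$ directly as a Markov chain that advances deterministically through the $F(i)-1$ fresh states and, out of the stale states, resets with probability $p_i$ per slot. Its stationary distribution places equal mass on each fresh state and geometrically decaying mass on the stale states; summing the fresh masses returns exactly \eqref{hitInf}, confirming both the formula and the window length.
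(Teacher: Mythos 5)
Your proof is correct, and it reaches \eqref{hitInf} by a genuinely different route from the paper. The paper's argument is a stationary, backward-looking one: it fixes the present slot, observes the window of the previous $F(i)-1$ slots, notes that a fresh copy of $i$ is present now if and only if exactly one miss-and-fetch for $i$ occurred in that window, and uses the mutual disjointness of those miss events together with stationarity ($\mathbb{P}(M_t)=p_i(1-h_{\mathcal{A}}(i))$ for each $t$) to obtain the fixed-point equation $h_{\mathcal{A}}(i)=(F(i)-1)p_i\bigl(1-h_{\mathcal{A}}(i)\bigr)$, which it then solves. You instead run a forward-looking renewal--reward argument on cycles delimited by successive fetches of content $i$, computing $\mathbb{E}[H]=(F(i)-1)p_i$ hits per cycle against $\mathbb{E}[H]+1$ requests per cycle, which yields the formula directly as a ratio of expectations rather than implicitly. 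Both arguments rest on the same structural facts --- with an infinite cache the policy is never invoked, the fresh window after a fetch is exactly $F(i)-1$ slots (your bookkeeping of the age update matches Algorithm~\ref{firstalgo}, under which a copy fetched in slot $k$ can serve hits in slots $k+1,\dots,k+F(i)-1$), and each window contains exactly one miss. Your version has the minor advantage of producing the answer without solving an equation and of computing $h_{\mathcal{A}}(i)$ directly from its definition as a long-run fraction of requests, while the paper's version is the one that generalizes to the inequality in Theorem~\ref{thmU}, where the union bound replaces the exact disjoint decomposition; your Markov-chain cross-check on the age process is a third, equally valid confirmation.
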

\vspace{-7mm}
To prove the above theorem, we use the fact that a fresh content $i$ exists in cache in the present slot \emph{if and only if} there is a single arrival of content $i$ into the cache in the past $F(i)-1$ slots. Refer to Section \ref{main_results} for a detailed proof. The next theorem shows that this infinite cache hit-rate is an upper bound for the hit-rates of any caching policy. 

\begin{theorem}
For a system consisting of $n$ objects with finite cache-size $m$, and freshness specification $ F(i)$ for object $i$, under policy $\mathcal{A} \in \mathbb{A}$ where $\mathbb{A}$ is the set of all replacement policies,
\begin{align}
h_{\mathcal{A}}(i) & \leq h^U(i) \; ; \; i\in \lbrace1,2,\ldots,n \rbrace, \; \; \forall \; \mathcal{A} \in \mathbb{A}, \; \; \text{where} \; \;  h^U(i) \triangleq \frac{(F(i)-1)p_{i}}{1+(F(i)-1)p_{i}} .
\label{hitUpperdefn}
\end{align}
\label{thmU}
\end{theorem}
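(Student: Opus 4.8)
The plan is to show that, for every content $i$, no finite cache can keep $i$ fresh for a larger fraction of time than the infinite cache of Theorem~\ref{thmInf}, whose hit-rate is exactly $h^U(i)$. A tempting route---coupling the policy $\mathcal{A}$ to the (never-evicting) infinite cache on a common request stream and arguing slot-by-slot domination---does \emph{not} work: after an eviction, the finite cache may re-fetch $i$ and thereby reset its age, so at isolated slots the finite cache can actually hold a \emph{fresher} copy than the infinite one. I would therefore argue in the time-average (renewal-reward) sense, which is legitimate because the caching process is an ergodic Markov chain, so $h_{\mathcal{A}}(i)$ equals the long-run fraction of slots in which $i$ is fresh-present. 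Fix $i$ and partition the time axis into cycles delimited by successive \emph{fetches} of $i$ into the cache; by Remark~\ref{rem1} a fetch occurs precisely at a miss, so each cycle begins with a request for $i$ that finds $i$ not fresh.

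The argument rests on two structural facts about a single cycle. First, the number $D$ of slots in a cycle during which $i$ is fresh-present satisfies $D\le F(i)-1$: a copy fetched at the start of the cycle goes stale exactly $F(i)-1$ slots later irrespective of the policy---a hit does not re-fetch and hence does not rejuvenate the copy (Algorithm~\ref{firstalgo})---while an eviction before that can only shorten the fresh window. Second, once the fresh portion of the cycle ends, $i$ is not fresh-present (either stale or evicted) until the next request for it arrives, which by Remark~\ref{rem1} triggers the next fetch and hence the next cycle. Since under the IRM the requests for $i$ are i.i.d.\ Bernoulli($p_i$) and independent of the cache contents, this ``gap'' is a geometric waiting time whose mean $c>0$ depends only on $p_i$; crucially, $c$ is the same for every policy (and in particular for the infinite cache) and is independent of $D$.

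Combining the two facts through the renewal-reward theorem gives $h_{\mathcal{A}}(i)=\mathbb{E}[D]\big/\big(\mathbb{E}[D]+c\big)$, with $\mathbb{E}[\cdot]$ taken under the stationary cycle law. Applying the same identity to the infinite cache of Theorem~\ref{thmInf}, which never evicts and therefore has $D\equiv F(i)-1$, yields $h^U(i)=(F(i)-1)\big/\big((F(i)-1)+c\big)$. Since $x\mapsto x/(x+c)$ is strictly increasing and $\mathbb{E}[D]\le F(i)-1$, we conclude $h_{\mathcal{A}}(i)\le h^U(i)$. The main obstacle is the rigorous justification of this cycle decomposition: one must verify that the gaps are geometric variables whose empirical mean converges to the policy-independent constant $c$ regardless of the (possibly state-dependent and correlated) eviction decisions, and that the long-run fresh fraction genuinely equals $\mathbb{E}[D]/(\mathbb{E}[D]+c)$. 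Once these are in place, the bound $D\le F(i)-1$ together with the policy-independence of $c$ is exactly what makes the monotonicity argument deliver the inequality.
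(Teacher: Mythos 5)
Your proposal is correct in substance, but it takes a genuinely different route from the paper. The paper's proof is a stationary-slot argument: fix a slot $t=0$ in steady state, observe (Proposition \ref{prop2}) that a fresh copy of $i$ can be present only if $i$ was fetched into the cache during the preceding $F(i)-1$ slots, and union-bound the probability of such a fetch by $(F(i)-1)\,\mathbb{P}(M_t)=(F(i)-1)\,p_i\bigl(1-h_{\mathcal{A}}(i)\bigr)$, since every fetch is a miss and the stationary miss probability in a slot is $p_i(1-h_{\mathcal{A}}(i))$. Solving the resulting self-referential inequality $h_{\mathcal{A}}(i)\le (F(i)-1)p_i(1-h_{\mathcal{A}}(i))$ gives the bound in two lines, with no law-of-large-numbers machinery. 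Your cycle decomposition instead derives the exact identity $h_{\mathcal{A}}(i)=\mathbb{E}[D]/(\mathbb{E}[D]+1/p_i)$ (your constant is $c=1/p_i$ by memorylessness of the IRM, and the conditional-geometric structure of the gaps given the stopping time at which the fresh window closes is indeed policy-independent), and then concludes by monotonicity of $x\mapsto x/(x+c)$ together with $D\le F(i)-1$. This is stronger than what the paper proves here --- it recovers Theorem \ref{thmInf} as the case $D\equiv F(i)-1$ and Theorem \ref{thmLP} as well, and it locates the slack precisely in $F(i)-1-\mathbb{E}[D]$ --- and your observation that a naive slot-by-slot coupling fails (because a finite cache can hold a \emph{fresher} copy after a re-fetch) is a correct and worthwhile caveat. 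The price is the technical work you yourself flag: the cycles are not i.i.d.\ (the cache state at the start of each cycle varies with the policy), so the identity $h_{\mathcal{A}}(i)=\mathbb{E}[D]/(\mathbb{E}[D]+c)$ needs an ergodic ratio-limit argument plus a strong law for the conditionally geometric gaps rather than the textbook renewal--reward theorem. That gap is fillable, but the paper's fixed-point trick of bounding $h_{\mathcal{A}}(i)$ in terms of $1-h_{\mathcal{A}}(i)$ avoids it entirely.
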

\vspace{-7mm}

The proof for this theorem (refer to Section \ref{main_results}) uses the fact that the non-arrival of content $i$ into the cache in the past $F(i)-1$ slots implies that a fresh content $i$ does not exist in the cache in the present slot. This result leads us to a universal upper bound for the cache hit probability,
\begin{align}
\mathbb{P}_\mathcal{A}&(\text{hit}) \leq P^U, \; \; \forall \; \mathcal{A} \in \mathbb{A}, \; \; \text{where} \; \; P^U  \triangleq \sum_{i=1}^{n} p_{i} h^U(i)=  \sum_{i=1}^{n} p_{i} \frac{(F(i)-1) p_{i}}{1+(F(i)-1) p_{i}}. \label{Pupper}
\end{align}

\subsection{Least Popular (LP) Policy}
 The LP policy refers to a class of policies which endeavour to asymptotically estimate the popularity distribution of contents by observing the request stream, and then retain contents of higher popularity in the cache.
 
The formal definition of LP policy, (\emph{i.e.}, a policy which prioritizes contents based on their popularity) incorporated in Step \ref{Algo1step14} of Algorithm \ref{firstalgo} is as follows:

\begin{algorithm} 
\caption*{\textbf{LP Definition:}} \label{LPdefn}
\begin{algorithmic}[1]
        \If  { $p_{X_k} > p_i $ for some $ C_i \in \text{cache} $}
          \State Replace $C_{LP}$ with ${C_{X_k}}$; where 
          \Statex \hspace{4.5mm} \textit{LP} $ \triangleq \argmin_i \lbrace p_i: C_i \in \text{cache} \rbrace $
        \EndIf
\end{algorithmic}
\end{algorithm} 

The following theorem proves that the hit-rates for the $m$ most popular contents attain their \emph{upper bounds} whereas the remaining contents possess \emph{zero} probability of hit.
\vspace{-3mm}

\begin{theorem}
For a system with $n$ objects indexed in the decreasing order of popularity, cache-size $m$ ($m < n$), freshness specification $ F(i)$ for object $i$ and where $h^U(i)$ is defined as in \eqref{hitUpperdefn},
\begin{align}
h_{LP}(i) &= h^U(i) \; ; && i\in \lbrace1,2,\ldots,m\rbrace,  \nonumber \\
&= 0 \; ; && i\in \lbrace m+1,m+2\ldots,n \rbrace. \label{hitLP} 
\end{align}
\label{thmLP}
\end{theorem}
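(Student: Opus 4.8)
The plan is to prove Theorem \ref{thmLP} by combining the universal upper bound (Theorem \ref{thmU}) with a structural observation about the steady-state behavior of the LP policy: once the popularity estimation has converged, the cache contains exactly the $m$ most popular contents $\{1,2,\ldots,m\}$ at all times, and no content from $\{m+1,\ldots,n\}$ is ever retained. First I would argue that in steady state the LP policy keeps precisely the top-$m$ content \emph{types} resident in the cache. The LP definition replaces the cached content of least popularity with the newly fetched content $C_{X_k}$ whenever $p_{X_k}$ exceeds some cached popularity. Since contents are indexed in decreasing order of popularity, once the cache holds $\{1,\ldots,m\}$, any fetched content $C_{X_k}$ with $X_k > m$ satisfies $p_{X_k} \le p_i$ for all cached $i$, so the replacement step never fires and $C_{X_k}$ is not retained; conversely, any fetched $X_k \le m$ that is momentarily absent will evict the least-popular resident and re-enter. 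I would make this an invariant argument: the set of resident content types is absorbing at $\{1,\ldots,m\}$.

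Given this invariant, the second part of the claim, $h_{LP}(i) = 0$ for $i \in \{m+1,\ldots,n\}$, is essentially immediate: these content types never persist in the cache (a miss always triggers a fetch, but the fetched copy is discarded under LP since it is strictly less popular than every resident content), so a fresh copy of $i$ is never present when requested, giving zero hit probability. For the first part, $h_{LP}(i) = h^U(i)$ for $i \in \{1,\ldots,m\}$, the upper bound from Theorem \ref{thmU} already gives $h_{LP}(i) \le h^U(i)$. So it suffices to prove the matching lower bound $h_{LP}(i) \ge h^U(i)$. The key idea here is that, since content $i \le m$ is never evicted due to the arrival of a more popular content (there are at most $m-1$ contents more popular than $i$, so the cache can always accommodate $i$ alongside them), content $i$ is evicted or refreshed \emph{only} by the freshness mechanism and by its own re-fetches. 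In other words, content $i$ behaves exactly as it would in the infinite-cache scenario of Theorem \ref{thmInf}: a fresh copy of $i$ is present in the current slot \emph{if and only if} there was at least one request for $i$ (hence a fetch) within the preceding $F(i)-1$ slots. This is precisely the event whose probability was computed to be $h^U(i)$ in the proof of Theorem \ref{thmInf}.

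The main obstacle I anticipate is making the convergence-and-invariance argument rigorous despite the LP policy's reliance on \emph{estimated} popularities rather than true ones. The introduction states that popularity-estimation schemes recover the exact distribution asymptotically, so I would invoke this to assert that, after a transient, the induced ordering of estimates matches the true ordering $p_1 > p_2 > \cdots$ (assuming distinct popularities), and steady-state hit-rates are therefore governed by the true-popularity version of the policy. I would then need to verify carefully that conditioning on the top-$m$ invariant does not corrupt the freshness-based counting argument — specifically, that for $i \le m$ the presence of a fresh copy reduces exactly to the ``single (re-)fetch within the last $F(i)-1$ slots'' event used for the infinite cache, with finite cache-size $m$ playing no role because the $m$ most popular contents always fit. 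Once that reduction is justified, the computation of $h^U(i)$ is inherited verbatim from Theorem \ref{thmInf}, completing the proof.
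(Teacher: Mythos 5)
Your proposal is correct and follows essentially the same route as the paper: both rest on the observations that a top-$m$ content, once cached, can never be displaced by another content (so Proposition \ref{prop1} and the infinite-cache computation of Theorem \ref{thmInf} apply verbatim), and that no content indexed above $m$ is ever retained in steady state, giving zero hit-rate. The only nitpick is your parenthetical ``at least one request for $i$ (hence a fetch)''---a request is a fetch only when it is a miss, and the correct event is ``exactly one arrival (fetch) in the past $F(i)-1$ slots,'' which you do state correctly later in the same paragraph.
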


\vspace{-12mm}

The proof essentially follows from the fact that in the steady state, the cache contains only the $m$ most popular contents. Refer to Section \ref{main_results} for the proof. Next, we analyze the performance of the LP policy under the practical scenario in which the number of objects is ever increasing. 

\subsubsection*{Asymptotic analysis of hit probability} 
We characterize the hit probability $\mathbb{P}_{LP}(\text{hit})$ under a scaling of the system with respect to the number of objects and the cache-size. The setting for the analysis is as follows:
\begin{itemize}
\item[1.] For the sake of simplicity, we let $ F(i)=F \hspace{2mm}\forall \; i $ and refer to it as the \emph{freshness parameter}.
\item[2.] We scale the system by letting $n \rightarrow \infty$, and let $M(n)$ and $F(n)$ denote the cache-size and the freshness parameter scaling functions respectively.
\item[3.] Contents are indexed in the decreasing order of popularity and   follow the Zipf's law with parameter $\beta$. In order to enable scaling of the model, we assume $\beta > 1$ so that the Zipf distribution is well-defined as $n \rightarrow \infty$.

\end{itemize}


The following theorem shows that, under the above setting, as long as the cache size and the  freshness parameter are increasing with $n$, the LP policy is asymptotically optimal.

\vspace{-3mm}
\begin{theorem}
For a system consisting of $n$ objects with cache-size $M(n)$ and freshness parameter $F(n)$, if $ M(n)=\omega(1) $ and $ F(n)=\omega(1) $, then $ \underset{n \rightarrow \infty }{\lim} \mathbb{P}_{LP}(\normalfont \text{hit}) = 1$.
\label{asymptote}
\end{theorem}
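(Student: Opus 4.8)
The plan is to start from the exact closed-form hit-rates furnished by Theorem~\ref{thmLP}, rather than attacking the hit probability directly. Writing $F=F(n)$, $M=M(n)$ and recalling $\sum_{i=1}^n p_i = 1$, the per-slot hit probability under LP is
\[
\mathbb{P}_{LP}(\text{hit}) = \sum_{i=1}^{n} p_i\, h_{LP}(i) = \sum_{i=1}^{M} p_i\, h^U(i) = \sum_{i=1}^{M} \frac{(F-1)p_i^2}{1+(F-1)p_i}.
\]
Since this is automatically at most $1$, I would instead bound the complementary quantity $1-\mathbb{P}_{LP}(\text{hit})\ge 0$ and show it vanishes. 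Using $1-h^U(i)=\frac{1}{1+(F-1)p_i}$ on the cached head and $h_{LP}(i)=0$ on the tail gives the exact identity
\[
1-\mathbb{P}_{LP}(\text{hit}) = \underbrace{\sum_{i=1}^{M} \frac{p_i}{1+(F-1)p_i}}_{(\mathrm{I})} \;+\; \underbrace{\sum_{i=M+1}^{n} p_i}_{(\mathrm{II})},
\]
so the claim reduces to showing that both $(\mathrm{I})$ and $(\mathrm{II})$ tend to $0$.

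Term $(\mathrm{II})$ is routine. Under the Zipf law with $\beta>1$ the normalizing constant $\sum_{j=1}^n j^{-\beta}$ is at least $1$, so $p_i \le i^{-\beta}$ uniformly in $n$; hence $(\mathrm{II}) \le \sum_{i=M+1}^{\infty} i^{-\beta}$, the tail of a convergent series, which vanishes because $M(n)=\omega(1)\to\infty$. Term $(\mathrm{I})$ is the substantive part, and it is where I expect the main obstacle to lie: its summands do \emph{not} vanish uniformly. For the most popular contents $p_i=\Theta(1)$ and the summand is $\approx 1/(F-1)$, whereas for the less popular ones the summand is $\approx p_i$, so neither $M\to\infty$ alone nor $F\to\infty$ alone controls the whole sum.

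The resolution is a two-regime split at an intermediate threshold $K=K(n)$. Using the two elementary bounds $\frac{p_i}{1+(F-1)p_i}\le \frac{1}{F-1}$ (valid once $F>1$, hence for all large $n$) on the head $i\le K$, and $\frac{p_i}{1+(F-1)p_i}\le p_i$ on the remaining indices, one obtains, for \emph{any} $K$ and regardless of how $K$ compares to $M$,
\[
(\mathrm{I}) \;\le\; \frac{K}{F-1} + \sum_{i=K+1}^{\infty} i^{-\beta}.
\]
The final step is to choose $K$ to make both pieces vanish simultaneously: I need $K\to\infty$ to kill the series tail, and $K=o(F(n))$ to kill $K/(F-1)$. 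Since $F(n)=\omega(1)$, the choice $K(n)=\lfloor \sqrt{F(n)}\rfloor$ works, giving $K/(F-1)\sim 1/\sqrt{F(n)}\to 0$ together with $K\to\infty$. Crucially, $K$ is selected from $F$ alone, which keeps the argument insensitive to the fact that $M$ and $F$ are independent scaling functions. Combining the three estimates yields $1-\mathbb{P}_{LP}(\text{hit})\to 0$, i.e. $\lim_{n\to\infty}\mathbb{P}_{LP}(\text{hit})=1$, as claimed.
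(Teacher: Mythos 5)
Your proposal is correct and follows essentially the same route as the paper: both start from the identity $1-\mathbb{P}_{LP}(\text{hit})=\sum_{i=1}^{M}\frac{p_i}{1+(F-1)p_i}+\sum_{i>M}p_i$, kill the second term with the Zipf tail (using $M(n)\to\infty$ and $\beta>1$), and control the first by splitting it at a threshold where $(F-1)p_i$ stops dominating the denominator. The only real difference is organizational: the paper splits at $F(n)^{1/\beta}$ and runs a two-case analysis on whether $M(n)$ is $O(F(n)^{1/\beta})$ or $\omega(F(n)^{1/\beta})$, which yields explicit rates $1-O(M(n)^{-(\beta-1)})$ and $1-O(F(n)^{-(\beta-1)/\beta})$, whereas your single threshold $K=\lfloor\sqrt{F(n)}\rfloor$, chosen from $F$ alone and valid regardless of how $K$ compares to $M$, avoids the case split entirely at the cost of a slightly weaker implicit rate --- which is immaterial for the limit statement being proved.
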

Here, since $ \mathbb{P}_{LP}(\text{hit}) < P^U < 1 $, we have $ \mathbb{P}_{LP}(\text{hit}) \rightarrow P^U $, \emph{i.e.}, the LP policy attains upper bound performance as $n \rightarrow \infty$. Moreover, the cache performance asymptotically reaches full efficiency. Refer to Section \ref{main_results} for a detailed proof.

\subsection{Least Recently Used (LRU) Policy}
The LRU policy replaces the least recently requested content from the cache with the fetched content. The formal definition of the LRU policy incorporated in Step \ref{Algo1step14} of Algorithm \ref{firstalgo} is as follows:

\begin{algorithm}
\caption*{\textbf{LRU Definition:}} \label{LRUdefn}
\begin{algorithmic}[1]
        \State Replace $C_{LRU}$ with ${C_{X_k}}$; where \newline \textit{LRU} $ \triangleq \argmin_i \lbrace \text{last used time-slot of } C_i: C_i \in \text{cache} \rbrace $
\end{algorithmic}
\end{algorithm}

\subsubsection*{An Approximation for LRU Hit-rate}

\par We now derive an approximation to the hit-rate under the LRU policy.
Denote by $T_c(i)$, the number of time slots by which $m$ distinct contents other than $C_i$ are requested at least once. We refer to $T_c(i)$ as the \textit{characteristic time} of content $i$. 

\begin{figure}[H]
\centering
\begin{tikzpicture}[scale=0.5]
\draw[gray, thick, ->] (1,0) -- (11,0);
\draw[black, thick] (1,0.2) -- (1,-0.2);
\draw[black,thick] (1,0.7) node[anchor=south] {Content $`i$' requested};
\draw[black, thick] (2,0.2) -- (2,-0.2);
\draw[black, thick] (3,0.2) -- (3,-0.2);
\draw[black, thick] (4,0.2) -- (4,-0.2);
\draw[black, thick] (5,0.2) -- (5,-0.2);
\draw[black, thick] (6,0.2) -- (6,-0.2);
\draw[black, thick] (7,0.2) -- (7,-0.2);
\draw[black, thick] (8,0.2) -- (8,-0.2);
\draw[black, thick] (9,0.2) -- (9,-0.2);
\draw[black, thick] (10,0.2) -- (10,-0.2);
\draw[black, thick, ->] (1,0.9) -- (1,0.5);
\draw[decoration={brace,mirror,raise=5pt},decorate]
  (2,-0.5) -- node[below=6pt] {$T_c(i)$ slots} (10,-0.5);
\draw[black, thick, ->] (10,0.9) -- (10,0.5);
\draw[black,thick] (10,0.7) node[anchor=south] {Content $`i$' replaced};
\draw[black,thick] (1,-1.3) node[anchor=south] {\text{\small{$t=0$}}}; 
\end{tikzpicture}
\caption{It takes $T_c(i)$ time slots for content $i$ to become the least recently used amongst cached contents.} \label{fig:LRU}
\end{figure}

Suppose that $C_i$ is requested in the present slot ($t=0$). Assuming that the contents in the cache are ordered as most recently used first, $C_i$ currently occupies the top position. In the ensuing time, it takes $m$ distinct requests apart from $C_i$ for $C_i$ to shuffle to the bottom of the cache and get evicted. This is true provided $C_i$ is not requested again before getting evicted. Under this condition, $T_c(i)$ denotes the number of time slots for which $C_i$ stays in the cache before getting replaced. Hence, quantifying $ T_c(i) $ becomes important for calculating the hit-rate of $C_i$ under the LRU policy. The characteristic time $T_c(i)$ can be mapped to the celebrated \textit{Coupon Collector's Problem} described below. 

\noindent \textbf{Coupon Collector's Problem:} Given a collection $\mathcal{C}=\lbrace 1,2,\ldots,n \rbrace$ of \textit{n} coupons with $p_i$ being the probability of drawing coupon $i$, determine the number of independent draws (with replacement) from $\mathcal{C}$ to first obtain a collection of \textit{m} different coupons.

Let $T_m$ denote the number of draws referred to as the \emph{waiting time} for the Coupon Collector's Problem. The expected number of draws has been shown to satisfy the following equation in \cite{Flajolet}:
\begin{align}
\mathbb{E}(T_m) = \sum_{q=0}^{m-1}(-1)^{m-1-q}{n-q-1 \choose n-m } \sum_{|J|=q} \frac{1}{1-P_J}, \; \; \text{where} \; \; \displaystyle P_J=\sum_{i \in J} p_i. \label{exp1}
\end{align}
\noindent  

In the context of the LRU policy with a cache of size $m$, $T_c(i)$ is analogous to the waiting time $T_m$ for the Coupon Collector's Problem. In particular, the request for a content $i$ in the former corresponds to a coupon $i$ being drawn from the collection $\mathcal{C} $ in the latter. However, the only difference is that, unlike $T_m$, $T_c(i)$ considers the occurence of coupon $i$ as a blank draw. We now make approximations for calculating $T_c(i)$ which are similar in spirit to those of Che \emph{et al.} \cite{Che2002} and \cite{Fricker}.
\subsubsection{Calculation of characteristic time $T_c(i)$} 
Let $\overline{T}_c $ denote the number of time-slots until $m+1$ distinct contents are requested. Let $Z_1 $ denote the first request. Now, if the first request is for content $i$, \emph{i.e.}, $Z_1 = i $, then $ T_c(i) = \overline{T}_c - 1 $. Using this as the basis, we derive the following result in Section \ref{main_results}.
\begin{align}
\mathbb{E}(\overline{T}_c)=1+ \sum_{i=1}^{n} \, \mathbb{E} \left( T_c(i) \right) \, \mathbb{P} ( Z_1 = i ).  \label{Expapprox}
\end{align} 

\textit{Approximation 1:} The dependence of $\mathbb{E} (T_c(i))$ on $i$ can be ignored, \emph{i.e.}, $\mathbb{E} (T_c(i)) \approx t_c \; \forall \; i $.

This is a reasonable approximation when the individual popularities are relatively insignificant to their sum and becomes exact if the requests are equiprobable. We support this claim for Zipf distributed requests using numerical simulations provided in Section \ref{valid_approx}. We notice that the error in approximating $\mathbb{E} (T_c(i))$ with $ t_c $ reduces with decreasing values of $ \beta $, with increasing values of the $ \frac{m}{n}$ ratio and also as $ n \rightarrow \infty $ for a fixed $ \frac{m}{n}$ ratio. Using this approximation in \eqref{Expapprox}, we get 
\begin{align}
\mathbb{E} (T_c(i)) \approx t_c \triangleq \mathbb{E}(\overline{T}_c) - 1, \; \; \forall \; i \in \lbrace 1,2,\ldots,n\rbrace.  \label{tc_approx}
\end{align} 

\textit{Approximation 2:} We assume that for large $n$, the random variable $T_c(i) $ is well approximated by its expected value (\emph{i.e.}, it is nearly deterministic).

In the next Section, we provide analytical justifications for the above approximation under Zipf distributed requests with parameter $\beta \in [0,1)$. In addition, our numerical simulations (refer to Section \ref{valid_approx}) validate this approximation for several values of $ \beta$ (including $\beta > 1$) and under a wide range of scaling functions used for $m$ with respect to $n$. Using this approximation in equation \eqref{tc_approx}, we get $ T_c(i) \approx t_c, \; \; \forall \; i \in \lbrace 1,2,\ldots,n\rbrace. $ In the works by \cite{Che2002} and \cite{Fricker}, $t_c$ is referred to as the characteristic time of the \emph{cache}. Finally, from the definition of $\overline{T_c}$, it follows that $ \overline{T_c}  = T_{m+1} $. Hence, $\mathbb{E}( \overline{T_c} )$ can be calculated from equation \eqref{exp1}, which is then used in equation \eqref{tc_approx} to compute $ t_c $.


\subsubsection{Hit-rate for LRU policy}
 We consider hit-rates under the following two ideal scenarios:
 \begin{itemize}
 \item \textit{Scenario 1:} If the cache size is infinite, then the duration that a content $i$ remains \emph{fresh} in the cache is limited by its \textit{freshness} specification $F(i)$. Then, the hit-rate is given by Theorem \ref{thmInf} as $ \displaystyle h(i)= h^U(i),\; i\in \lbrace1,2,\ldots,n\rbrace. $
 \item \textit{Scenario 2:} If the freshness specification of content $i$, $F(i)$ is infinite, then the duration that it resides in the cache since its last request is limited by the \textit{characteristic time} which arises because of finite cache size. Let $ \tau_i$ denote the inter-arrival time between requests for content $i$ under the IRM; $ \tau_i \sim \textit{Geom}(p_i)$. For a content $i$ requested in the present slot, a request after $ \tau_i $ slots will result in a cache hit if and only if $ \tau_i < T_c(i) $. Therefore, the hit-rate is given by $ \displaystyle h(i)= \mathbb{P}(\tau_i < T_c(i)) \approx \mathbb{P}(\tau_i < t_c) =1-(1-p_i)^{ t_c -1} \; ,\; i\in \lbrace1,2,\ldots,n\rbrace. $ This hit-rate could be considered as the discrete time equivalent of Che's approximation given in \cite{Che2002},\cite{Fricker} where there are no freshness constraints.
 \end{itemize}

In our scenario, both cache size and the freshness are finite quantities. Hence, it is reasonable to approximate the LRU hit-rate by the minimum of the ideal hit-rates under the above two scenarios. Specifically, our hit-rate approximation for LRU is given by
\begin{align}
h_{LRU}(i)\approx \min \left( \frac{(F(i)-1)p_i}{1+(F(i)-1)p_i} \; ; \; 1-(1-p_i)^{ t_c -1}  \right). \label{LRUapprox}
\end{align}
\vspace{-5mm}

In Section \ref{simulation}, we illustrate using numerical simulations that the above approximation is generally quite accurate.

\section{Asymptotic analysis of characteristic time for LRU policy}
\label{no_name}
In this section, we provide analytical results to justify our approximation that $T_c(i)$ is nearly determinstic for large $n$. As mentioned earlier, the characteristic time $T_c(i)$ of the LRU policy is analogous to the waiting time $T_m$ for the Coupon Collector's Problem. Hence, in the rest of this section, we use the parlance of the Coupon Collector's Problem. Since the asymptotic behaviour of $T_m$ and $T_c(i)$ are identical, we characterize $T_m$ instead of $T_c(i)$ to simplify the analysis.

In the following, we provide a limit theorem for the case where the coupon draws obey the Zipf's law with Zipf parameter $\beta \in [0,1).$ Note that the Zipf's law generalises the uniform coupon draw case for which, the asymptotic distribution for the waiting time has been investigated in the works \cite{erdos}, \cite{baum}, \cite{anna}.


\begin{theorem}
Consider the coupon draws being sampled from a Zipf distribution with parameter $\beta \in [0,1)$, such that the $n$ coupons are indexed in the decreasing order of popularity. If $ m= o ( n^{\frac{1-\beta}{2-\beta}})$, then $ T_m \overset{i.p}{\rightarrow} m$.  
\label{baum_zipf}
\end{theorem}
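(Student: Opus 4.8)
The plan is to exploit that $T_m$ is integer-valued with $T_m \ge m$ deterministically (one needs at least $m$ draws to witness $m$ distinct coupons), so convergence in probability to the growing target $m$ is equivalent to the single statement $\mathbb{P}(T_m > m) \rightarrow 0$. Since $T_m = m$ holds \emph{precisely} when the first $m$ draws are pairwise distinct, the theorem reduces to showing that, with high probability, no coupon repeats among the first $m$ draws. I would obtain this via a first-moment (Markov) bound on the nonnegative integer $T_m - m$, namely
\[
\mathbb{P}(T_m > m) = \mathbb{P}(T_m - m \ge 1) \le \mathbb{E}(T_m) - m .
\]

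To evaluate $\mathbb{E}(T_m)$ I would use the standard new-coupon decomposition $T_m = \sum_{j=0}^{m-1} G_j$, where $G_j$ is the number of draws needed to pass from $j$ to $j+1$ distinct coupons. Conditioned on the random set $S_j$ of the first $j$ collected coupons, $G_j$ is geometric with success probability $1 - P_{S_j}$, where $P_{S_j} = \sum_{i \in S_j} p_i$; hence $\mathbb{E}(G_j \mid S_j) = (1 - P_{S_j})^{-1}$ and
\[
\mathbb{E}(T_m) - m = \sum_{j=0}^{m-1} \mathbb{E}\!\left( \frac{P_{S_j}}{1 - P_{S_j}} \right).
\]
The structural observation that carries the most weight is that, because the coupons are indexed in decreasing order of popularity, the mass of any $j$-subset is dominated by that of the $j$ most popular coupons, i.e. $P_{S_j} \le \sum_{i=1}^{j} p_i$ deterministically. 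Since $m \ll n$ forces $\sum_{i=1}^{m} p_i \rightarrow 0$, the denominators are uniformly bounded below by a constant near $1$, giving
\[
\mathbb{E}(T_m) - m \;\lesssim\; \sum_{j=0}^{m-1} \sum_{i=1}^{j} p_i \;\le\; m \sum_{i=1}^{m} p_i .
\]

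Finally I would insert the Zipf asymptotics: for $\beta \in [0,1)$ the generalized harmonic number obeys $\sum_{i=1}^{k} i^{-\beta} \sim k^{1-\beta}/(1-\beta)$, so $\sum_{i=1}^{m} p_i = H_m^{(\beta)}/H_n^{(\beta)} \sim (m/n)^{1-\beta}$, whence $\mathbb{E}(T_m) - m \lesssim m^{2-\beta}/n^{1-\beta}$. This vanishes exactly when $m^{2-\beta} = o(n^{1-\beta})$, i.e. when $m = o\!\left(n^{(1-\beta)/(2-\beta)}\right)$, which is the hypothesis; therefore $\mathbb{P}(T_m > m) \rightarrow 0$ and $T_m \overset{i.p}{\rightarrow} m$. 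Notice that this expected-waiting-time route reproduces the stated exponent exactly, whereas a cruder birthday-collision union bound $\mathbb{P}(T_m>m)\le \binom{m}{2}\sum_i p_i^2$ would yield only a sufficient (less sharp) condition. The main obstacle to make fully rigorous is the deterministic ordering bound $P_{S_j} \le \sum_{i \le j} p_i$ together with the uniform control of the denominators (ensuring $\sum_{i=1}^m p_i$ stays bounded away from $1$ for all $j \le m-1$), and a statement of the harmonic-sum asymptotic that remains valid uniformly down to $\beta = 0$.
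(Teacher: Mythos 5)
Your proposal is correct and establishes exactly the stated condition, but it travels a genuinely different road from the paper's. The paper also decomposes $T_m$ into the per-stage waiting times $X_k$ for new coupons, but it works at the level of \emph{probabilities}: it lower-bounds $\mathbb{P}(X_k=1)\ge 1-\sum_{j=1}^{k-1}p_j$ using the popularity ordering, multiplies to get $\mathbb{P}(T_m=m)\ge\bigl(1-\sum_{j=1}^{m-1}p_j\bigr)^{m-1}$, and then applies the Zipf partial-sum estimate $\sum_{i=1}^{m}p_i=\Theta\bigl((m/n)^{1-\beta}\bigr)$ together with $\log x\ge 1-1/x$ to conclude $\log\mathbb{P}(T_m=m)\ge -m^{2-\beta}/n^{1-\beta}\to 0$. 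You instead work at the level of \emph{expectations}: Markov's inequality on the nonnegative integer $T_m-m$ reduces everything to $\mathbb{E}(T_m)-m=\sum_j\mathbb{E}\bigl(P_{S_j}/(1-P_{S_j})\bigr)\to 0$, and the same two structural inputs (the deterministic bound $P_{S_j}\le\sum_{i\le j}p_i$ and the Zipf asymptotic) give the identical final estimate $m^{2-\beta}/n^{1-\beta}$. Your route buys two things: it is cleanly insensitive to dependence among the stage waiting times (the paper asserts the $X_k$ are independent, which fails for non-uniform draw probabilities and would need to be repaired by a chain-rule conditioning; your tower-property computation needs no such claim), and it quantifies the deviation $\mathbb{E}(T_m)-m$ rather than only the event $\{T_m=m\}$. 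The paper's route, in exchange, gives an explicit exponential lower bound on $\mathbb{P}(T_m=m)$ rather than just a vanishing upper bound on its complement. One aside in your write-up is off: for $\beta\in(0,1/2)$ the birthday union bound $\binom{m}{2}\sum_i p_i^2=\Theta(m^2/n)$ is actually \emph{sharper} than your worst-case-subset estimate (it permits $m=o(\sqrt{n})$), so your claim that it is cruder is wrong in that range --- but this is immaterial, since the theorem only asserts sufficiency of $m=o\bigl(n^{(1-\beta)/(2-\beta)}\bigr)$ and your bound delivers exactly that.
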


In context of the LRU policy, the above theorem implies that if the cache-size $ m$ scales slower than $ n^{\frac{1-\beta}{2-\beta}}$, then asymptotically, $T_c(i)$ converges to $m$ in probability. Refer to Section \ref{main_results} for a detailed proof. The study by \cite{baum} provided limiting distributions for $T_m$ depending on how $m$ scales with $n$ as $ n \rightarrow \infty$. In particular, for the uniform case, \cite{baum} derived conditions on the scaling of $m $ under which $ T_m \overset{i.p}{\rightarrow} m$. This becomes a corollary to Theorem \ref{baum_zipf} for $\beta=0$ and is stated below.

\begin{cor}
Consider the coupon draws being equiprobable. If $m= o( \sqrt{n}) $, then $ T_m \overset{i.p}{\rightarrow} m$..  
\label{baum}
\end{cor}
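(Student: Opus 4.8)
The plan is to obtain the statement as the special case $\beta = 0$ of Theorem~\ref{baum_zipf}. First I would observe that when $\beta = 0$ the Zipf law $p_i \propto i^{-\beta}$ collapses to $p_i = 1/n$ for every $i$, which is exactly the equiprobable coupon-draw model assumed in the corollary. Since $\beta = 0$ lies in the admissible range $[0,1)$ of Theorem~\ref{baum_zipf}, its conclusion applies verbatim, and it only remains to evaluate the scaling threshold at $\beta = 0$: the exponent becomes $\frac{1-\beta}{2-\beta}\big|_{\beta=0} = \tfrac{1}{2}$, so $n^{(1-\beta)/(2-\beta)} = \sqrt{n}$. Hence the hypothesis $m = o(\sqrt{n})$ of the corollary is precisely the hypothesis $m = o(n^{(1-\beta)/(2-\beta)})$ of the theorem specialized to $\beta = 0$, and $T_m \overset{i.p}{\rightarrow} m$ follows immediately. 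In this sense the corollary is routine: all the genuine work lives in Theorem~\ref{baum_zipf}, and there is essentially no obstacle beyond checking that $\beta=0$ is covered by its hypotheses and that the threshold arithmetic is correct.

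Because the uniform case is far simpler than the general Zipf analysis, I would also record a short self-contained argument, both as a sanity check and because it yields the sharper conclusion that $T_m = m$ with probability tending to $1$. Decompose the waiting time as $T_m = \sum_{j=1}^{m} W_j$, where $W_j$ is the number of draws needed to acquire the $j$-th distinct coupon after $j-1$ have been collected. Under equiprobable draws, $W_j$ is geometric with success probability $(n-j+1)/n$, so that $\mathbb{E}(W_j) = n/(n-j+1)$, and the expected number of \emph{wasted} (repeat) draws satisfies
\begin{align}
\mathbb{E}(T_m - m) = \sum_{j=1}^{m} \left( \frac{n}{n-j+1} - 1 \right) = \sum_{j=1}^{m} \frac{j-1}{n-j+1} \leq \frac{m(m-1)}{2(n-m+1)}.
\end{align}

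Since $m = o(\sqrt{n})$, the right-hand side is $o(1)$. Now I would use the fact that $T_m - m$ is a nonnegative integer (it counts repeated draws), so by Markov's inequality $\mathbb{P}(T_m \neq m) = \mathbb{P}(T_m - m \geq 1) \leq \mathbb{E}(T_m - m) \to 0$, giving $\mathbb{P}(T_m = m) \to 1$ and hence $T_m \overset{i.p}{\rightarrow} m$. The only subtlety worth flagging is the integrality step: it is what upgrades a statement about the mean into the concentration claim, since for nonnegative integer-valued $T_m-m$ a vanishing expectation forces $T_m$ to equal its minimum value $m$ with high probability. This elementary route also makes transparent why the threshold is $\sqrt{n}$: the collision (wasted-draw) count scales like $m^2/n$, which vanishes exactly when $m = o(\sqrt{n})$, matching the $\beta=0$ specialization of Theorem~\ref{baum_zipf}.
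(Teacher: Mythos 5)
Your primary argument is exactly the paper's: the corollary is obtained there by specializing Theorem~\ref{baum_zipf} to $\beta=0$, where the threshold exponent $\frac{1-\beta}{2-\beta}$ evaluates to $\tfrac12$, and your checks (that $\beta=0$ is admissible and reduces the Zipf law to the uniform law) are all that is needed. Your supplementary self-contained argument is also correct and is a genuinely different route: the paper lower-bounds $\mathbb{P}(T_m=m)$ directly via $\prod_k \mathbb{P}(X_k=1) \geq (1-\sum_{j=1}^{m-1}p_j)^{m-1}$ and a logarithmic estimate, whereas you upper-bound $\mathbb{P}(T_m\neq m)$ by a first-moment (Markov) argument on the nonnegative integer-valued repeat count $T_m-m$, whose mean is at most $\frac{m(m-1)}{2(n-m+1)}=o(1)$. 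Both computations hinge on the same $m^2/n$ collision scaling, so they identify the same threshold, but your version is more elementary (no product manipulation or $\log x \geq 1-\tfrac1x$ estimate) and makes the origin of the $\sqrt{n}$ cutoff transparent; the paper's version has the advantage of generalizing to $\beta\in[0,1)$, which is the whole point of Theorem~\ref{baum_zipf}. Your flag on the integrality step is apt --- that is precisely what converts a vanishing mean into the high-probability statement $\mathbb{P}(T_m=m)\to 1$, which, as you note, is slightly stronger than convergence in probability alone.
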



Next, we justify Approximation 2 for the uniform coupon popularity. To this end, we derive concentration bounds for the deviations of $ T_m$ from its mean. 
\begin{theorem}
Consider the coupon draws being equiprobable. If $m= o(n) $, then for any $\delta > 0 $,
\begin{align*}
\mathbb{P} \left( \frac{T_m}{\mathbb{E}(T_m)} < (1 - \delta) \right) \leq \exp \left(- \frac{m}{2} \; \delta^2 \right) \; \; \text{and} \; \; \mathbb{P} \left( \frac{T_m}{\mathbb{E}(T_m)} > (1 + \delta) \right) \leq \exp \left( - \sqrt{\frac{n}{m}} \; \delta^{\frac{3}{2}} \right).
\end{align*}
\label{upper_lower_chernoff_tail}
\end{theorem}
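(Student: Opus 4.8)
The plan is to use the classical decomposition of the coupon-collector waiting time into independent geometric stages. When $j-1$ distinct coupons have already been collected, each fresh equiprobable draw is new with probability $p_j = (n-j+1)/n$, so $T_m = \sum_{j=1}^{m} G_j$ with the $G_j \sim \textit{Geom}(p_j)$ independent. This yields $\mathbb{E}(T_m) = \sum_{j=1}^m n/(n-j+1)$, and since every factor equals $1 + O(m/n)$ under $m = o(n)$, we get $\mathbb{E}(T_m) = m\,(1+o(1))$. Both tail estimates then come from the Chernoff (exponential-moment) method applied to this sum; as recorded in the theorem statement, the analysis is carried out for $T_m$ and transfers to $T_c(i)$ because their asymptotic behaviour is identical. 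The two directions must be treated asymmetrically, because a geometric variable is bounded below by $1$ (a light, sub-Gaussian-like left tail) but has a heavier exponential right tail.

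For the \emph{lower} tail I would pass to the dual occupancy process. Writing $N_s$ for the number of distinct coupons observed in the first $s$ draws, the events $\{T_m \le s\}$ and $\{N_s \ge m\}$ coincide, so $\mathbb{P}(T_m < (1-\delta)\mathbb{E}(T_m))$ is an \emph{upper}-tail probability for $N_s$ at $s \approx (1-\delta)\mathbb{E}(T_m)$. Now $N_s = \sum_{i=1}^n \mathbb{1}[\text{coupon } i \text{ seen in } s \text{ draws}]$ is a sum of negatively associated Bernoulli indicators, so the standard multiplicative Chernoff bound applies as if they were independent. Using $\mathbb{E}(N_s) \le s \le (1-\delta)\mathbb{E}(T_m)$ together with $m \ge \mathbb{E}(T_m)(1-o(1))$, one verifies $m \ge \big(1 + \tfrac{\delta}{1-\delta}\big)\mathbb{E}(N_s)$, and the bound $\mathbb{P}(N_s \ge (1+\varepsilon)\mathbb{E}(N_s)) \le \exp\!\big(-\tfrac{\varepsilon^2}{2+\varepsilon}\mathbb{E}(N_s)\big)$ with $\varepsilon = \delta/(1-\delta)$ simplifies to the claimed $\exp(-\tfrac{m}{2}\delta^2)$. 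The only bookkeeping is to absorb the $O(m/n)$ gaps between $\mathbb{E}(T_m)$ and $m$ and between $\mathbb{E}(N_s)$ and $s$, both $o(1)$ when $m=o(n)$.

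For the \emph{upper} tail I would work directly with the geometric sum: for $0 < t < \ln\frac{n}{m-1}$, which is the convergence radius of the joint moment generating function dictated by the smallest success probability $p_m$,
\begin{align*}
\mathbb{P}\big(T_m > (1+\delta)\mathbb{E}(T_m)\big) \le \exp\!\Big(-t(1+\delta)\mathbb{E}(T_m) + \sum_{j=1}^m \ln \tfrac{p_j e^{t}}{1-(1-p_j)e^{t}}\Big),
\end{align*}
after which I would bound the log-moment-generating function using $1-p_j = (j-1)/n \le (m-1)/n$ and $\sum_{j=1}^m (j-1) = \binom{m}{2}\approx m^2/2$. The principal obstacle lies here. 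The MGF diverges as $t \uparrow \ln\frac{n}{m-1}$, so the Chernoff parameter must be kept strictly below that threshold; moreover the \emph{fully optimized} parameter produces a rate of order $\delta m \ln(n/m)$ rather than the cleaner $\sqrt{n/m}\,\delta^{3/2}$ quoted in the statement. The unusual $\delta^{3/2}$ power signals that, in the operative deviation regime, it is not the variance but the third cumulant of $T_m$ (of order $m^2/n$) that controls the exponent: balancing that cubic contribution against the linear-in-$\delta$ term fixes a sub-optimal but convenient $t$ and yields an exponent of the required shape, which is then weakened to $\sqrt{n/m}\,\delta^{3/2}$. Carrying out this balancing — controlling the higher-order terms of $\sum_j \ln(p_j e^t / (1-(1-p_j)e^t))$ and checking that the chosen $t$ stays inside the convergence radius while the residual terms are dominated — is where essentially all the difficulty resides, in contrast to the comparatively routine lower tail.
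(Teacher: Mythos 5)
Your lower-tail argument is correct but takes a genuinely different route from the paper. The paper works directly with the moment generating function of $T_m=\sum_{k=1}^m X_k$, $X_k\sim\textit{Geom}\big(\tfrac{n-k+1}{n}\big)$: it bounds $\mathbb{E}\big(e^{-\lambda(X_k-\mu_k)}\big)$ via $e^{\lambda}-1\ge\lambda$ and $e^{x}/(1+x)\le e^{x^2/2}$ to get a sub-Gaussian estimate $\exp\big(-\lambda c+\tfrac{\lambda^2}{2}\sum_k p_k^{-2}\big)$, optimizes $\lambda$, and then shows $c^2/\sum_k p_k^{-2}\gtrsim m\,\delta^2$. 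Your dualization to the occupancy count $N_s$ (with $\{T_m\le s\}=\{N_s\ge m\}$), combined with negative association of the occupancy indicators and the multiplicative Chernoff bound, lands on the exponent $\tfrac{(m-\mu)^2}{m+\mu}\approx\tfrac{m\delta^2}{2-\delta}\ge\tfrac m2\delta^2$ with the same $(1+o(1))$ bookkeeping that the paper also absorbs into an ``$\approx$''. Your route is arguably cleaner (it avoids the ad hoc inequality $e^x/(1+x)\le e^{x^2/2}$) at the cost of importing the negative-association fact; both are acceptable.

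The upper tail is where your proposal has a genuine gap, and your own (correct) diagnosis shows why the sketched repair cannot succeed. After bounding the log-MGF, both you and the paper reduce to an exponent of the form $\sup_{\lambda>0}\{a\lambda-b(e^{\lambda}-1)\}=a\log\tfrac ab-a+b$ with $a\approx m\delta$ and $b\approx m^2/(2n)$, i.e.\ a best-possible rate of order $m\delta\log\tfrac{n\delta}{m}$. You observe this and propose to recover the $\sqrt{n/m}\,\delta^{3/2}$ form by a ``sub-optimal but convenient'' choice of $t$ driven by the third cumulant. That cannot work: every admissible $t$ yields an exponent \emph{at most} the optimized one, and $m\delta\log\tfrac{n\delta}{m}=o\big(\sqrt{n/m}\,\delta^{3/2}\big)$ whenever $m^3\log^2(n/m)=o(n)$ (e.g.\ $m=\Theta(\log n)$ or $m=n^{1/4}$), so in that regime the target exponent is strictly larger than anything this Chernoff computation can deliver. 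Indeed, for fixed $m$ and $\delta<\tfrac{1}{2m}$ one has $\mathbb{P}\big(T_m>\mathbb{E}(T_m)(1+\delta)\big)\ge\mathbb{P}(T_m\ge m+1)\sim\binom m2/n$, which is only polynomially small, so no argument can reach $\exp\big(-\sqrt{n/m}\,\delta^{3/2}\big)$ there. For comparison, the paper passes from $a\log(b/a)+a-b$ to $-a\sqrt{a/b}\le-\sqrt{2nm}\,\delta^{3/2}\le-\sqrt{n/m}\,\delta^{3/2}$ by applying $\log x\le\tfrac{x-1}{\sqrt x}$ to $\log(b/a)$; since $b/a<1$ this inequality points the wrong way at that point, which is exactly consistent with your finding that the true optimized rate is only $m\delta\log(n/m)$. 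In short: your upper-tail plan coincides with the paper's up to the optimization step, but it is not completed, and the completion you outline is not available without either restricting the growth of $m$ (roughly $m^3\gtrsim n$) or accepting an exponent of the logarithmic form $m\delta\log(n\delta/m)$.
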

\vspace{-8mm}
The above theorem provides upper and lower tail bounds for $ T_m$. Due to the assymmetric nature of the upper and lower tail bounds, depending on the scaling function of the cache-size, one of the tail bounds could become tight around the mean faster than the other as $ n \rightarrow \infty$. In particular, for $m = o (n^{\frac{1}{3}})$, the upper tail bound becomes tight faster than the lower tail bound. The vice-versa holds true for $m = \omega (n^{\frac{1}{3}})$. Interestingly, a cache-size scaling of $m = \Theta (n^{\frac{1}{3}})$ provides the best gaurantee for a tight concentration of $T_c(i)$ about its mean.


\begin{theorem}
Consider the coupons draw being equiprobable. If $m= \omega(1) $ and $m= o(n) $, then $ T_m \overset{i.p}{\rightarrow}  \mathbb{E}(T_m) $.  
\label{chernoff}
\end{theorem}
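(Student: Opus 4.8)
The plan is to read the statement $T_m \overset{i.p}{\rightarrow} \mathbb{E}(T_m)$ in the scale-free sense $T_m/\mathbb{E}(T_m) \overset{i.p}{\rightarrow} 1$, and then to obtain it as an immediate consequence of the two-sided tail bound already established in Theorem \ref{upper_lower_chernoff_tail}. The crucial observation is that the two hypotheses, $m = \omega(1)$ and $m = o(n)$, are each tailored to one of the two (asymmetric) tails: the first controls downward deviations, the second controls upward deviations. Since both hypotheses in particular imply $m = o(n)$, the tail bounds of Theorem \ref{upper_lower_chernoff_tail} are applicable throughout.

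First I would fix an arbitrary $\delta > 0$ and split the deviation probability into its two one-sided pieces,
\begin{align*}
\mathbb{P}\left(\left| \frac{T_m}{\mathbb{E}(T_m)} - 1 \right| > \delta \right) \leq \mathbb{P}\left( \frac{T_m}{\mathbb{E}(T_m)} < 1-\delta \right) + \mathbb{P}\left( \frac{T_m}{\mathbb{E}(T_m)} > 1+\delta \right),
\end{align*}
and then invoke Theorem \ref{upper_lower_chernoff_tail} to bound the first term by $\exp\!\left(-\tfrac{m}{2}\delta^2\right)$ and the second by $\exp\!\left(-\sqrt{n/m}\,\delta^{3/2}\right)$. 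The remaining step is to argue that both bounds vanish as $n \to \infty$: the lower-tail bound $\exp\!\left(-\tfrac{m}{2}\delta^2\right) \to 0$ because $m = \omega(1)$ forces $m \to \infty$ while $\delta$ stays fixed; the upper-tail bound $\exp\!\left(-\sqrt{n/m}\,\delta^{3/2}\right) \to 0$ because $m = o(n)$ gives $n/m \to \infty$, hence $\sqrt{n/m} \to \infty$. Summing, the total deviation probability tends to $0$ for every fixed $\delta > 0$, which is precisely convergence in probability.

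There is essentially no residual computation, so rather than an obstacle I would stress the structural point: the asymmetry of the two tail bounds in Theorem \ref{upper_lower_chernoff_tail} means the two scaling conditions play genuinely distinct roles and are both indispensable. Dropping $m = \omega(1)$ would leave the downward-deviation bound ineffective (it would not decay if $m$ stayed bounded), while dropping $m = o(n)$ would leave the upward-deviation bound ineffective (the exponent would not blow up). Together they pin $T_m$ to its mean, and this is exactly the regime $m = \omega(1),\, m = o(n)$ asserted in the theorem.
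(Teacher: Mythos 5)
Your proposal is correct and follows essentially the same route as the paper: the paper also deduces Theorem \ref{chernoff} directly from the two one-sided tail bounds of Theorem \ref{upper_lower_chernoff_tail}, letting $m=\omega(1)$ kill the lower-tail bound and $m=o(n)$ kill the upper-tail bound for each fixed $\delta>0$. Your added remark about the two hypotheses each being indispensable for one of the asymmetric tails is a nice observation but does not change the argument.
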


In comparison with Corollary \ref{baum}, the above theorem shows that even under relaxed constraints for the scaling of $m$, convergence of $T_m$ to its mean in probability is still achieved. Theorems \ref{upper_lower_chernoff_tail} and \ref{chernoff} imply that for the case of equiprobable requests, Approximation 2 is justified for any sub-linear scaling of cache-size which is a reasonable assumption in practical scenarios. On the other hand, for a more general distribution of requests \big(Zipf, $\beta \in [0,1)$\big), Approximation 2 is valid under the constraints on cache-size scaling imposed by Theorem \ref{baum_zipf}. 
Deriving concentration bounds similar to Theorem \ref{upper_lower_chernoff_tail} for general values of $\beta$ appears more challenging and is left for future work. However, we provide empirical evidence which validates Approximation 2 for several values of $ \beta$ and a wide range of cache-size scalings.



\subsection*{Empirical Validation of approximations for LRU policy}  \label{valid_approx}
\subsubsection{Simulations to validate Approximation 1}
We first obtain $\mathbb{E} (T_c(i)) $ for all $ i \in \lbrace1,2,\ldots,n \rbrace$ and $ \mathbb{E}(\overline{T}_c)$ by averaging these quantities over sufficiently large number of simulations. We then use $ \mathbb{E}(\overline{T}_c)$ and obtain $t_c$ from equation \eqref{tc_approx} to calculate the error in approximating $\mathbb{E} (T_c(i)) $. This error denoted by $ \mu $ is given by $ \displaystyle \mu = \frac{|\mathbb{E} (T_c(i)) - t_c|}{\mathbb{E} (T_c(i))}  $.

\begin{figure*}
\begin{minipage}[b]{0.5\linewidth}
\centering 
\includegraphics[width=1.05\linewidth]{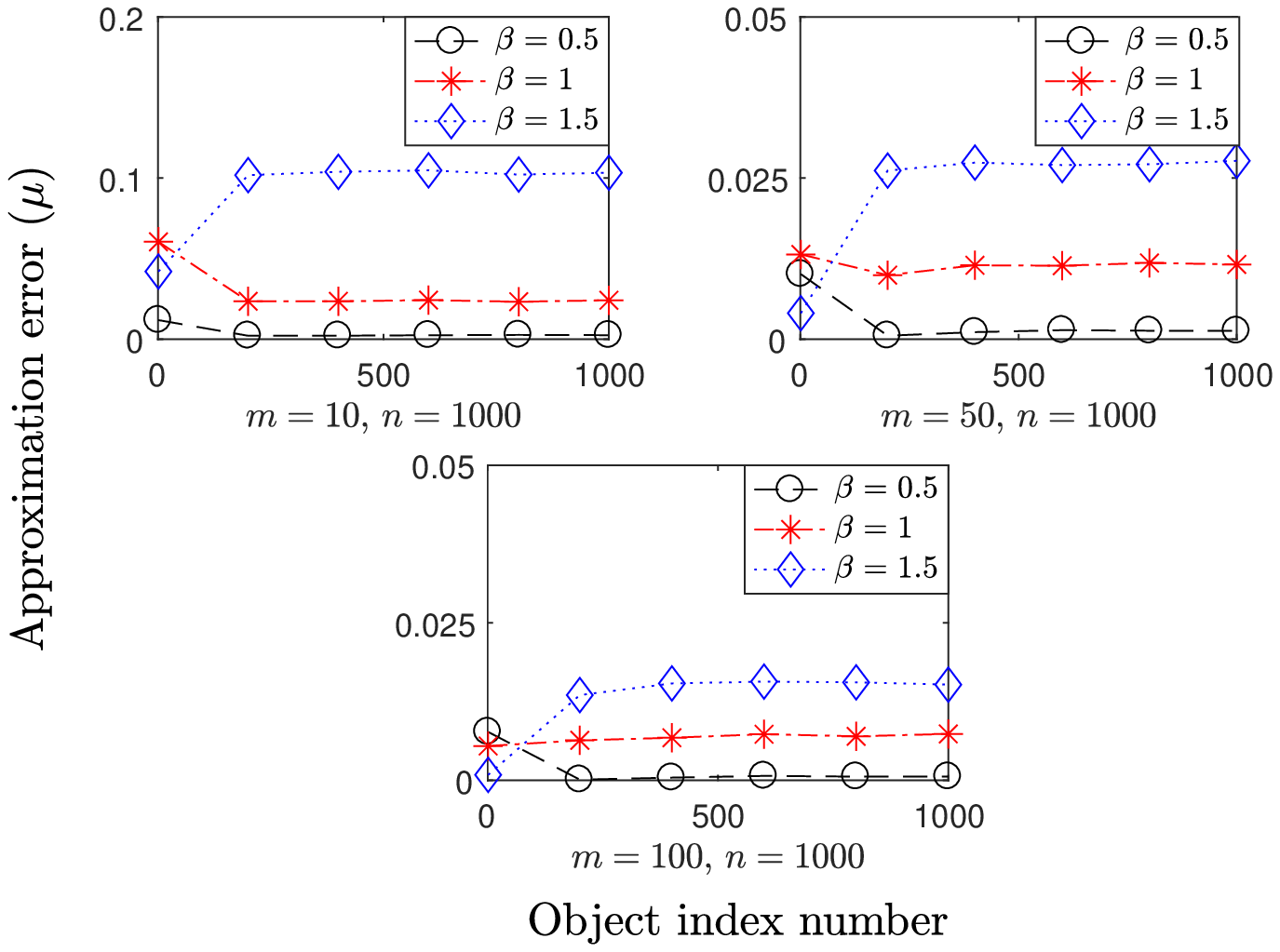}
\caption*{(a)}
\end{minipage} 
\begin{minipage}[b]{0.5\linewidth}
\centering
\includegraphics[width=1.05\linewidth]{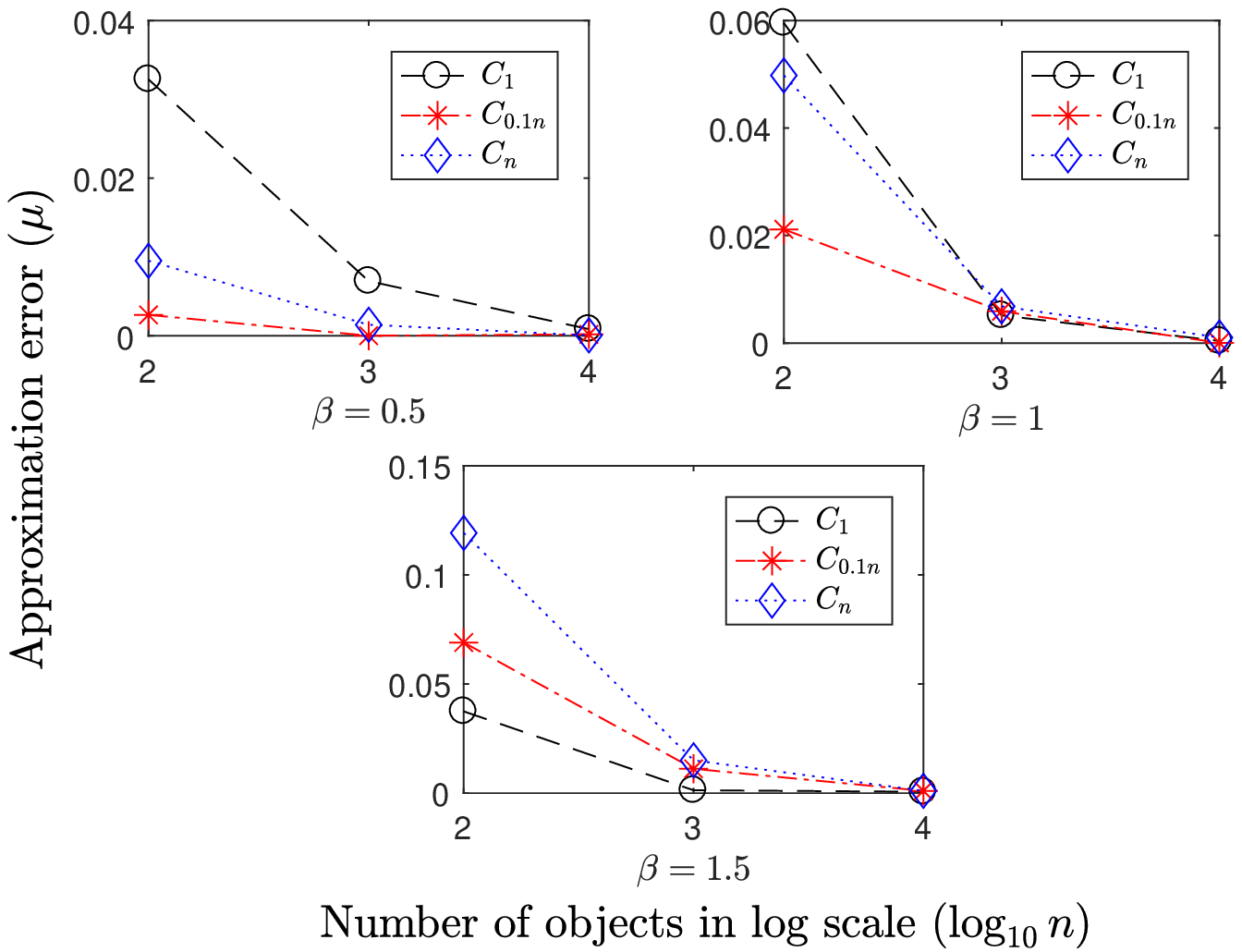}
\caption*{(b)}
\end{minipage}
\caption{(a) Error in approximating $\mathbb{E} (T_c(i)) $ against the object index $i$ for values of Zipf parameter $\beta=0.5,1,1.5$ and (b) asymptotic trend for the error in approximating $\mathbb{E} (T_c(i)) $ as the number of objects, $n \rightarrow \infty $. Here, cache-size $m=0.1n$ and $\beta=0.5,1,1.5$.}
\label{mu}
\end{figure*}  

In Figure \ref{mu}$(a)$, we plot the error against content index for a total number of contents $n=1000$. We consider different $ \frac{m}{n}$ ratios and different values of $ \beta $ to obtain the plots. We notice that the error in approximating $\mathbb{E} (T_c(i))$ with $ t_c $ reduces with decreasing values of $ \beta $ and increasing values of the $ \frac{m}{n}$ ratio. Further, in Figure \ref{mu}$(b)$, we plot the error against the number of objects $n$ for contents with indices $1, 0.1n$ and $n$ denoted by $ C_1, C_{0.1n}$ and $ C_n$ respectively. We fix values for $ \beta$ and $\frac{m}{n}$ and observe that the error decays to zero as $ n \rightarrow \infty $. From this, we infer that Approximation 1 becomes better with an increase in the number of objects.

\subsubsection{Simulations to validate Approximation 2}
To show that $ T_c(i) $ concentrates around $\mathbb{E}( T_c(i) ) $ as $ n \rightarrow \infty $, it is sufficient to show that its coefficient of variation $ \displaystyle c_v(T_c(i)) \triangleq \frac{\sigma( T_c(i) )}{\mathbb{E}( T_c(i) )} \rightarrow 0 $ as $ n \rightarrow \infty $. Here $\sigma( T_c(i) )$ denotes the standard deviation for $ T_c(i) $. We plot $c_v( T_c(i) ) $ against $ n $ in Figures \ref{Tc_i_approx}$(a)$ and \ref{Tc_i_approx}$(b)$ for contents with indices $1, 0.1n$ and $n$ denoted by $ C_1, C_{0.1n}$ and $ C_n$ respectively. We perform simulations for different values of $ \beta $ and different scaling functions for the cache-size with respect to $n$ -- namely, \emph{linear}, \emph{square root} and \emph{logarithmic} scaling.  

\begin{figure*}
          \begin{minipage}[b]{0.5\linewidth}
\centering
\includegraphics[width=1.05\linewidth]{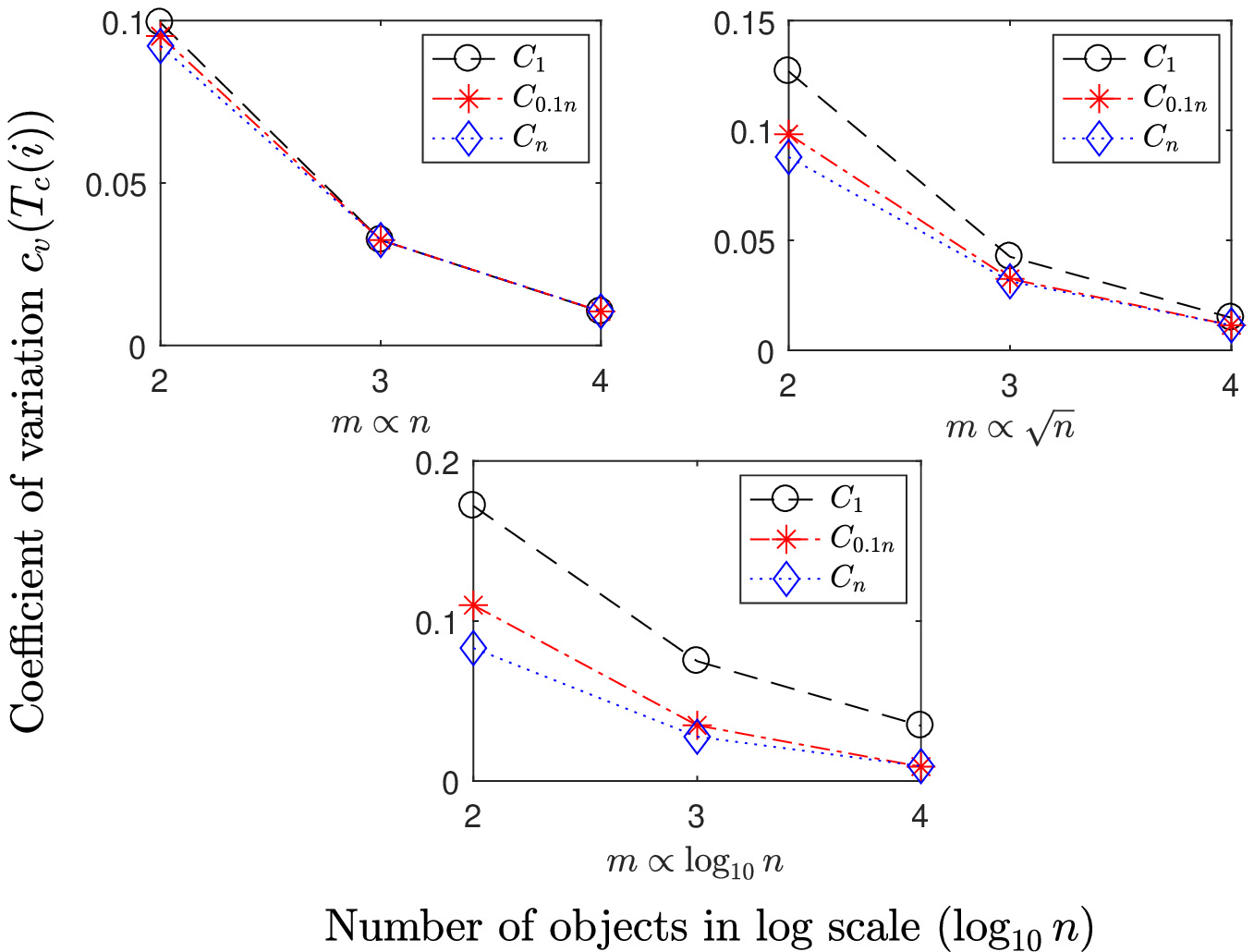}
\caption*{(a) $\beta = 0.5$}
\end{minipage} 
\begin{minipage}[b]{0.5\linewidth}
\centering
\includegraphics[width=1.05\linewidth]{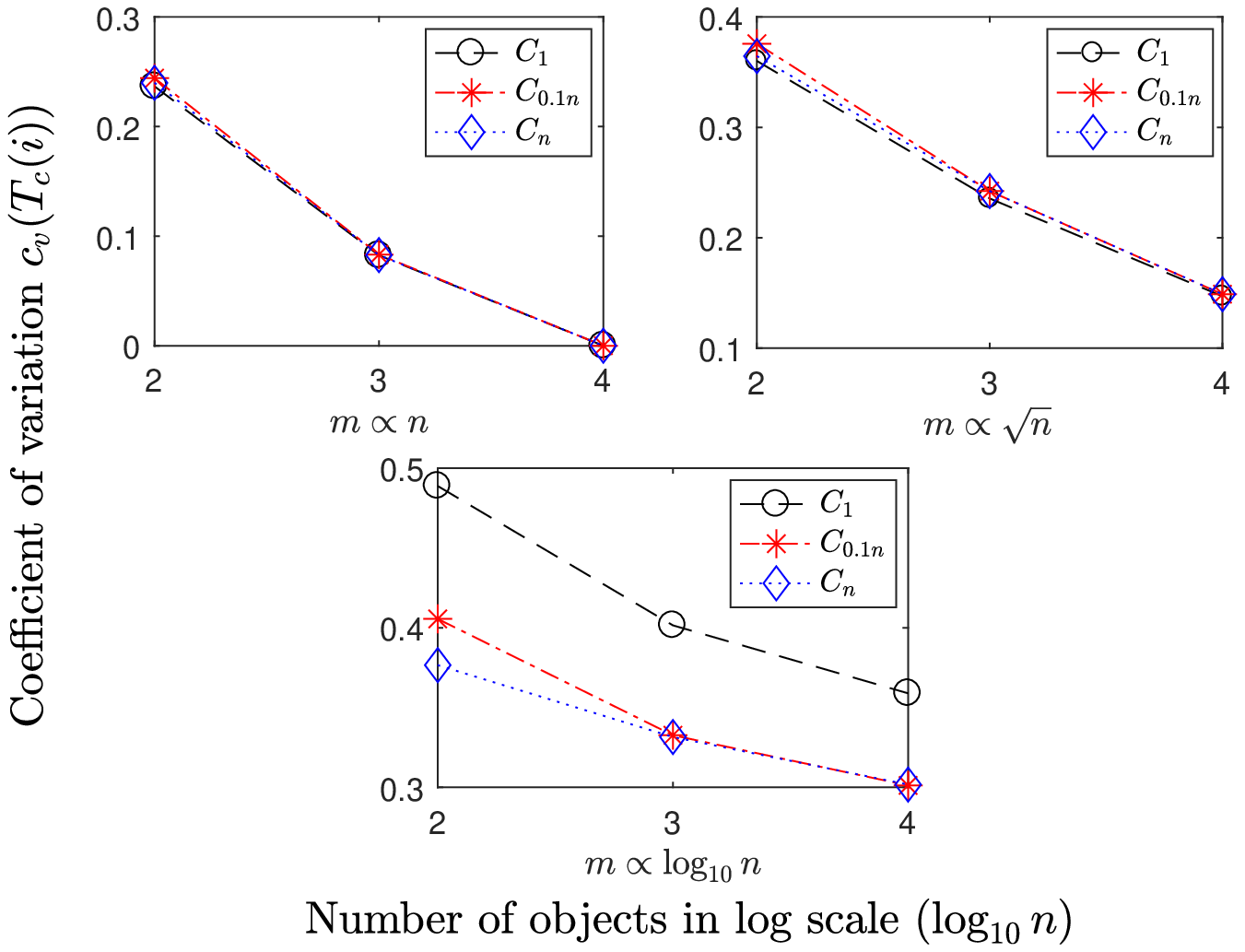}
\caption*{(b) $\beta = 1.5 $.}
\end{minipage}
\caption{Asymptotic trend for the coefficient of variation of $ T_c(i) $ as the number of objects, $n \rightarrow \infty $.}
\label{Tc_i_approx}
\end{figure*}

We observe that $ c_v( T_c(i) ) $ has a decaying trend towards zero as $ n \rightarrow \infty $. This supports our claim in Approximation 2. Further, we observe that $ c_v( T_c(i) ) $ decays faster for lower values of $ \beta $. Also, we observe a faster decay in $ c_v( T_c(i) )$ as we move from a \emph{logarithmic} scaling of cache-size  to the \emph{square root} and \emph{linear} scalings.

\section{Policies which consider freshness of contents} \label{aware}
In this section, we assume that the server knows the freshness specification $F(i)$ for all objects $ i\in \lbrace1,2,\ldots,n \rbrace $. We study the class of policies $ \mathbb{M} $, which explicitly use this freshness information in their cache replacement strategy. Firstly, we state the most basic feature that any freshness aware policy must incorporate -- namely, using the knowledge of $F(i)$ to discard stale data from the cache. More specifically, a content $i$ must be allowed to reside in the cache only for $F(i)$ time slots since its arrival, after which it must be evicted. With this feature, the general template for a policy $\mathcal{M} \in \mathbb{M}$ is given in Algorithm \ref{secondalgo}. Using this template, we modify the traditional LP and LRU policies. We incorporate the LP and LRU definitions in Step \ref{Algo2step14} of Algorithm \ref{secondalgo} to obtain the Modified-LP (M-LP) and Modified-LRU (M-LRU) policies respectively. Our next result shows that the M-LP policy achieves better performance than the traditional LP policy.

\begin{algorithm}
\caption{Freshness-aware caching.} \label{secondalgo}
\begin{algorithmic}[1]
\State $k \gets 1$.
\Loop
\For { $ C_i \in cache$ }
   \If {$Age(C_i) > F(i) $}
       \State Remove $C_i$ from cache \Comment{stale data}
   \EndIf
\EndFor
\If { $C_{X_k} \in \text{cache}$ }
    \State serve the request \Comment{cache hit}
\Else    
    \State Fetch content type $C_{X_k}$ from back-end
    \State serve the request  \Comment{cache miss}
    \If {cache is full}
        \State Implement cache replacement policy $ \mathcal{M} $ \label{Algo2step14}
        \State {\textbf{if} $C_{X_k} \in \text{cache}$ \textbf{then} $Age(C_{X_k}) \gets 1$ }
    \Else   
         \State Place $C_{X_k}$ in cache. $Age(C_{X_k}) \gets 1$
    \EndIf     
\EndIf     
\For { $ C_i \in cache$ } 
       \State $Age(C_i) \gets Age(C_i)+1 $
\EndFor
\State $k \gets k+1 $
\EndLoop
\end{algorithmic}
\end{algorithm}

\vspace{-5mm}
\begin{theorem}
For a system with $n$ objects indexed in the decreasing order of popularity, cache-size $m$ ($m < n$), freshness specification $ F(i)$ for object $i$ and $h^U(i)$ defined as in \eqref{hitUpperdefn},
\begin{align}
& h_{M\text{-}LP}(i) = h^U(i) \; ; \hspace{8.5mm} i\in \lbrace1,2,\ldots,m\rbrace.  \nonumber \\
& 0 < h_{M\text{-}LP}(i) \leq h^U(i) \; ; \hspace{2mm} i\in \lbrace m+1,m+2,\ldots,n \rbrace.  \label{hitMLP} \\
\intertext{} \nonumber
\end{align}
\label{thmMLP}
\end{theorem}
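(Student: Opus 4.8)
The plan is to treat the two index ranges separately, while disposing of the upper bound $h_{M\text{-}LP}(i)\le h^U(i)$ (valid for \emph{every} $i$) in one stroke. Since M-LP is a legitimate replacement policy we have $M\text{-}LP\in\mathbb{A}$, so the universal bound of Theorem \ref{thmU} applies verbatim; indeed, the only fact on which that bound rests --- that a fresh copy of content $i$ cannot be present unless $i$ was fetched within the previous $F(i)-1$ slots --- follows from the age/freshness definitions alone and is policy-independent. It therefore remains to upgrade this inequality to an \emph{equality} for $i\le m$ and to establish \emph{strict positivity} for $i>m$.

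For $i\in\{1,\dots,m\}$, the strategy is to show that the $m$ most popular contents evolve \emph{exactly} as in the infinite-cache scenario of Theorem \ref{thmInf}, so that their hit-rate coincides with $h^U(i)$. Two claims suffice. First, whenever a content $i\le m$ is requested and is not already present as a fresh copy, it is successfully (re)cached: on such a miss either the cache has a free slot, or it is full, in which case at most $i-1\le m-1$ of its $m$ occupants are more popular than $i$, so the least popular occupant $C_{LP}$ satisfies $p_{LP}<p_i$ and the LP rule inserts $C_i$. Second --- the crux --- a content $i\le m$ is \emph{never} ejected by the LP rule, but only by the staleness purge at the top of Algorithm \ref{secondalgo}. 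For if $C_i$ with $i\le m$ were the LP victim in a full cache, every other occupant would be more popular, forcing the cache to be exactly $\{C_1,\dots,C_m\}$ with $i=m$; but the content triggering the eviction arrives on a \emph{miss}, and because the staleness purge has already removed every stale copy, a miss means the fetched content is genuinely absent from $\{C_1,\dots,C_m\}$, hence has index $>m$ and popularity $<p_m$, contradicting the eviction requirement $p_{X_k}>p_{LP}=p_m$. Combining the two claims, a fresh copy of $C_i$ ($i\le m$) is present precisely when $i$ was fetched within the last $F(i)-1$ slots --- exactly the infinite-cache dynamics --- giving $h_{M\text{-}LP}(i)=h^U(i)$.

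For $i\in\{m+1,\dots,n\}$, the plan is to prove $h_{M\text{-}LP}(i)>0$ by reachability. The caching process is a finite-state (contents together with their ages, which are bounded by $\max_j F(j)$), irreducible and aperiodic Markov chain, so the stationary probability that a fresh copy of $C_i$ resides in the cache is positive as soon as some such state is reached with positive probability. I would exhibit one via an explicit positive-probability request sequence that exploits the freshness purge: request the most popular content for $\max_j F(j)$ consecutive slots, during which every other content ages past its freshness limit and is evicted, so that a cache slot falls vacant (for the degenerate case $m=1$, additionally let the remaining popular content go stale so that the cache empties); a subsequent request for content $i$ then enters the cache through the free-slot branch of Algorithm \ref{secondalgo}, and one further request for $C_i$ within its freshness window is a hit. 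Ergodicity then yields $h_{M\text{-}LP}(i)>0$.

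I expect the main obstacle to be the equality for $i\le m$, and within it the no-eviction invariant: one must argue carefully that the staleness purge preceding the hit/miss test guarantees that a fetched (hence missing) content is truly outside the cache, which is exactly what rules out the displacement of $C_m$ and lets the top-$m$ dynamics decouple from the remaining contents and coincide with the infinite-cache process. The positivity claim for $i>m$ is conceptually easy but must be phrased as a clean reachability statement rather than an explicit hit-rate computation, since under M-LP the occupancy of the unpopular contents depends intricately on the random aging of the popular ones.
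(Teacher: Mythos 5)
Your proof is correct and follows essentially the same route as the paper, which merely sketches this result in two sentences: equality for the top-$m$ contents by showing their dynamics reduce to the infinite-cache case of Theorem~\ref{thmInf} (as in Theorem~\ref{thmLP}), and positivity for $i>m$ because staleness evictions free up slots. You supply the details the paper leaves implicit (the no-eviction invariant for $i\le m$ via the post-purge miss argument, and the explicit reachability sequence for $i>m$), but the underlying argument is the same.
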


\vspace{-33mm}
This follows from arguments similar to those made for proving Theorem \ref{thmLP} except that the contents $\lbrace m+1,m+2,\ldots,n \rbrace $ possess positive hit probabilities. This is because, the continuous eviction of the top $m$ popular contents as and when they get stale allows contents $\lbrace m+1,m+2,\ldots,n \rbrace $ to enter the cache. From Theorems \ref{thmLP} and \ref{thmMLP} we have, $ h_{LP}(i)  < h_{M\text{-}LP}(i) ; \, i\in \lbrace m+1,m+2,\ldots,n \rbrace.$ Therefore $ \mathbb{P}_{LP}(\text{hit})  < \mathbb{P}_{M\text{-}LP}(\text{hit}) $. Hence, M-LP outperforms LP. This shows that caching redundancies in the form of stale contents is strictly sub-optimal.

\subsection*{Least Expected Hits (LEH) Policy}

We now propose a new policy called the LEH policy which takes into account both popularity and freshness specifications of contents in order to make content replacement decisions. As the name suggests, under the LEH policy, content priority is decided by determining the expected number of hits that it would generate, if it were to be retained (or introduced in the cache, as the case may be). In contrast to M-LP and M-LRU policies, which only use freshness specifications to evict stale data, LEH policy additionally uses them to prioritize among the cached contents. 

In the present time slot, consider a content $i$ which has resided in the cache for $T$ slots since its arrival $(T \leq F(i))$. It can be retained in the cache for a maximum duration of $ F(i)-T $ further slots. Let $N_i$ denote the number of possible future hits for content $i$ if it were not discarded. Then, $ N_i \sim \text{Binomial}(F(i)-T, p_i),$ where $ \mathbb{E}(N_i) = \left( F(i)-T \right) p_i. $ On the other hand, consider a content $j$ fetched from the back-end. If it were to be placed in the cache in the present slot, then the number of possible hits it would have generated if it were retained for the future $ F(j)-1 $ slots would be $ N_j \sim \text{Binomial}(F(j)-1, p_j), $ where $\mathbb{E}(N_j)  = \left( F(j)-1 \right) p_j.$ 

Now, consider the event that a content arrives and the cache is full. The LEH policy essentially makes its content replacement decision by comparing the expected number of future hits of the arriving content, with the expected number of future hits of the cached contents.

A formal definition of the LEH policy incorporated in Step \ref{Algo2step14} of Algorithm \ref{secondalgo} is as follows:

\begin{algorithm}
\caption*{\textbf{LEH Definition:}} \label{LEHdefn}
\begin{algorithmic}[1]
        \If  {$ \mathbb{E} ( N_{X_k} ) > \mathbb{E} ( N_{i} )$ for some $ C_i \in \text{cache} $}
          \State Replace $C_{LEH}$ with ${C_{X_k}}$; where 
          \Statex \hspace{4.5mm} \textit{LEH} $\triangleq \argmin_i \lbrace  \mathbb{E} ( N_{i} ) : C_i \in \text{cache} \rbrace $.
        \EndIf
\end{algorithmic}
\end{algorithm}

In Section \ref{simulation}, we illustrate through simulations that this framework of prioritizing contents indeed serves to improve the caching performance.

\section{Proof of main results} \label{main_results}
\begin{figure}[H]
\centering
\begin{tikzpicture}[scale=0.5]
\draw[gray, thick, ->] (1,0) -- (11,0);
\draw[black, thick] (1,0.2) -- (1,-0.2);
\draw[black, thick] (2,0.2) -- (2,-0.2);
\draw[black, thick] (3,0.2) -- (3,-0.2);
\draw[black, thick] (4,0.2) -- (4,-0.2);
\draw[black, thick] (5,0.2) -- (5,-0.2);
\draw[black, thick] (6,0.2) -- (6,-0.2);
\draw[black, thick] (7,0.2) -- (7,-0.2);
\draw[black, thick] (8,0.2) -- (8,-0.2);
\draw[black, thick] (9,0.2) -- (9,-0.2);
\draw[black, thick] (10,0.2) -- (10,-0.2);
\draw[decoration={brace,mirror,raise=5pt},decorate]
  (1,-0.5) -- node[below=6pt] {$(F(i)-1)$ slots} (9,-0.5);
\draw[black, thick, ->] (10,0.9) -- (10,0.5);
\draw[black,thick] (10,0.7) node[anchor=south] {$t=0$};
 
\end{tikzpicture}
\caption{Typically, a content $i$ in the cache stays \emph{fresh} for $F(i)-1$ slots after its arrival unless it is replaced prematurely.} \label{append}
\end{figure}

\subsection{Proof of Theorem \ref{thmInf}}
\begin{proof}
We consider a discrete time slot $t=0$ in the steady state (refer Figure \ref{append}) referred here as the present slot and observe the previous $F(i)-1$ slots indexed by $ t \in \lbrace -(F(i)-1), \ldots, -1 \rbrace $. The following three events $ \lbrace X_t,H_t,M_t \rbrace $ could have occurred in slot $t$. 
\begin{itemize}
\item[$a.$] $X_t$ = \{No request for content $i$\}; $\mathbb{P}(X_t)= 1-p_i $.
\item[$b.$] $H_t$ = \{Cache hit for content $i$\}; $\mathbb{P}(H_t)= p_i h_\mathcal{A}(i) $.
\item[$c.$] $M_t$ = \{Cache miss for content $i$\}; $\mathbb{P}(M_t)= p_i \left( 1- h_\mathcal{A}(i) \right) $. This corresponds to an arrival of content $i$ in the cache.
\end{itemize}

Owing to infinite cache-size, any content $i$ entering the cache remains \emph{fresh} for the entire $F(i)$ slots and the cache dynamics is indifferent to any policy $ \mathcal{A} \in \mathbb{A}$.  This leads us to state the following.

\begin{prop}
Fresh content $i$ exists in cache in the present slot if and only if there is a single arrival for $i$ in the past $F(i)-1$ slots.
\label{prop1}
\end{prop}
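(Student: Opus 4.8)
The plan is to exploit the infinite-cache assumption, which renders the cache dynamics policy-independent: since there is always room, content $i$ is never evicted to make space, so the only mechanism that removes a fresh copy of $i$ is the passage of time beyond its freshness window. First I would fix the present slot $t=0$ and record, for each past slot $t \in \lbrace -(F(i)-1),\ldots,-1 \rbrace$, which of the mutually exclusive events $X_t, H_t, M_t$ occurred, recalling that only a miss $M_t$ corresponds to an actual arrival (fetch) of content $i$ into the cache.

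The first key step is to pin down the freshness window. A copy of $i$ fetched in slot $-s$ has age $s$ at the present slot and, per the convention illustrated in Figure \ref{append}, remains fresh precisely while $s \le F(i)-1$. Hence a fresh copy of $i$ can be present at $t=0$ only if the fetch that produced it occurred within the last $F(i)-1$ slots, i.e. exactly the window named in the statement.

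The crux, and the step I expect to require the most care, is to show that this window can contain \emph{at most} one arrival. I would argue by contradiction: suppose misses (arrivals) occurred in two distinct slots $-s_1$ and $-s_2$ of the window with $s_1 > s_2$. The copy fetched at $-s_1$ has age $s_1 - s_2 \le F(i)-2$ at slot $-s_2$ and is therefore still fresh there; consequently the request at $-s_2$ would be served from the cache (event $H_{-s_2}$) rather than trigger a fetch, contradicting the assumption that $-s_2$ was an arrival. This ``at most one arrival'' property is exactly where the infinite cache (no space-driven eviction, so the copy genuinely persists) is used, and it is what upgrades ``at least one arrival'' to ``a single arrival''.

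With these pieces the equivalence follows in both directions. For the forward direction, if a fresh copy of $i$ is present at $t=0$ it must have been fetched in the window, giving at least one arrival, which by the at-most-one property is a single arrival. For the reverse, a single arrival at some $-s$ with $s \le F(i)-1$ places a copy of $i$ in the cache that, by infinite capacity, persists to the present with age $s \le F(i)-1$ and is thus fresh. I would close by noting that this characterization is precisely what lets the subsequent computation in the proof of Theorem \ref{thmInf} reduce $h_{\mathcal{A}}(i)$ to the probability of exactly one arrival across the $F(i)-1$ past slots, independent of the policy $\mathcal{A} \in \mathbb{A}$.
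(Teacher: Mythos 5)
Your argument is correct and follows essentially the same route as the paper, which asserts the proposition directly from the observation that with an infinite cache a fetched copy of $i$ persists and stays fresh for its entire window, independent of the policy. Your explicit ``at most one arrival'' contradiction step is exactly the mutual disjointness of the miss events $M_t$ that the paper invokes (without elaboration) in step $(a)$ of its hit-rate computation, so you have simply spelled out what the paper leaves implicit.
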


Let $A$ denote the event that there is a single arrival for $i$ in the past $F(i)-1$ slots. This is analogous to the event that there is a single cache miss for content $i$ in the past $(F(i)-1)$ slots.
From Proposition \ref{prop1}, the hit-rate can be obtained as follows:
\begin{align*}
h_{\mathcal{A}}(i) &= \mathbb{P}(A) = \mathbb{P} \left( \bigcup_{t=-(F-1)}^{1} M_t \right) \overset{(a)}{=} \sum_{t=-(F-1)}^{1} \mathbb{P}( M_t ) = (F(i)-1)p_i \left( 1- h_{\mathcal{A}}(i) \right). 
\end{align*}
Step $(a)$ follows since $ \big\{ M_t : t \in \lbrace -(F(i)-1), \ldots, -1 \rbrace \big\} $ is a set of mutually disjoint events.
\begin{align*}
\therefore h_{\mathcal{A}}(i) &= \frac{(F(i)-1)p_{i}}{1+(F(i)-1)p_{i}} \; , \; && i\in \lbrace1,2,\ldots,n \rbrace.
\end{align*}

\vspace{-3mm}
\end{proof}

\subsection{Proof of Theorem \ref{thmU}}
\begin{proof}
We consider a discrete time slot $t=0$ in the steady state (refer Figure \ref{append}) referred here as the present slot and observe the previous $F(i)-1$ slots indexed by $ t \in \lbrace -(F(i)-1), \ldots, -1 \rbrace $. The following three events $ \lbrace X_t,H_t,M_t \rbrace $ could have occurred in slot $t$. 
\begin{itemize}
\item[$a.$] $X_t$ = \{No request for content $i$\}; $\mathbb{P}(X_t)= 1-p_i $.
\item[$b.$] $H_t$ = \{Cache hit for content $i$\}; $\mathbb{P}(H_t)= p_i h_\mathcal{A}(i) $.
\item[$c.$] $M_t$ = \{Cache miss for content $i$\}; $\mathbb{P}(M_t)=$ $p_i \left( 1- h_\mathcal{A}(i) \right) $ where $M_t$ comprises of two disjoint events:\\
\indent $M_t^A$ = \{Cache miss for content $i$ resulting in an arrival of content $i$ in the cache.\}, 

\noindent $M_t^{NA}$ = \{Cache miss for content $i$ not resulting in an arrival of content $i$ in the cache\}.

\end{itemize}

\noindent Then, clearly the following proposition about the present slot holds irrespective of the caching policy $ \mathcal{A} \in \mathbb{A}$ being used. 

\begin{prop}
No arrival of content $i$ in the past $F(i)-1$ slots implies that a fresh content $i$ does not exist in the cache in the present slot.
\label{prop2}
\end{prop}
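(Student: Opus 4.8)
The plan is to establish Proposition~\ref{prop2} directly, by tracking the \emph{age} of any cached copy of content $i$, and to stress that the argument is policy-independent. The crucial observation is that, under Algorithm~\ref{firstalgo}, the age of a cached content is reset to $1$ \emph{only} upon a fetch from the back-end (an arrival, i.e. an occurrence of the event $M_t^A$ introduced above), and is otherwise incremented by one each slot; in particular, a cache hit does \emph{not} rejuvenate a stored copy. Hence, by Remark~\ref{rem_age}, the age of the resident copy of $i$ at the present slot $t=0$ equals the number of slots elapsed since its most recent arrival --- a quantity no policy $\mathcal{A}\in\mathbb{A}$ can alter, since a replacement policy only chooses \emph{which} content to evict, never when a content is fetched.

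First I would translate ``a fresh copy of $i$ is present at $t=0$'' into an age condition: by the Service Model such a copy must reside in the cache with age at most $F(i)$. Using the bookkeeping of Algorithm~\ref{firstalgo} (age $1$ at placement, incremented each slot), a copy placed $j\ge 1$ slots before the present has age $j+1$ at the start of the present slot, so freshness forces $j+1\le F(i)$, i.e. $j\le F(i)-1$. Consequently the last arrival of $i$ must lie in the window $\{-(F(i)-1),\dots,-1\}$ depicted in Figure~\ref{append}, where the arrival is confined to the strict past since $Y_0^i$ is defined independently of the present request.

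I would then simply read off the contrapositive, which is precisely the claimed statement: if there is \emph{no} arrival of $i$ during these past $F(i)-1$ slots, then either $i$ is absent from the cache (no fresh copy, trivially), or any resident copy was last placed at least $F(i)$ slots ago and therefore carries age at least $F(i)+1>F(i)$, i.e. is stale. In both cases no fresh copy of $i$ exists at $t=0$, as required.

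The main point requiring care --- and the chief contrast with Proposition~\ref{prop1} --- is that only this one-directional implication survives for a finite cache. In the infinite-cache setting a single arrival within the window is both necessary \emph{and} sufficient for freshness; but with $m<n$, premature eviction may remove $i$ before its freshness lapses, so a recent arrival no longer guarantees a resident fresh copy. I must therefore claim only ``freshness $\Rightarrow$ recent arrival'' and resist asserting the converse (this is exactly what makes the resulting bound an inequality $h_\mathcal{A}(i)\le h^U(i)$ in Theorem~\ref{thmU} rather than the equality of Theorem~\ref{thmInf}). The only remaining delicate detail is the off-by-one in the age convention; I would pin it down once via the placement-then-increment rule of Algorithm~\ref{firstalgo}, so that the freshness window is exactly $F(i)-1$ slots in agreement with Figure~\ref{append}, and verify that the boundary placement at slot $-F(i)$ is genuinely stale.
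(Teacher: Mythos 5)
Your proof is correct and matches the paper's reasoning: the paper simply asserts Proposition~\ref{prop2} as self-evident within the proof of Theorem~\ref{thmU}, and your age-tracking argument (age is reset to $1$ only on a back-end fetch, incremented each slot, so a fresh copy at $t=0$ forces the last arrival into the window of the past $F(i)-1$ slots, with the claim being the contrapositive) is exactly the justification the paper leaves implicit. Your observation that only this one direction survives for a finite cache --- yielding the inequality of Theorem~\ref{thmU} rather than the equality of Theorem~\ref{thmInf} --- is likewise consistent with how the paper uses the proposition.
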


Let $B$ denote the event that there is no arrival of content $i$ in the past $F(i)-1$ slots.
Using Proposition \ref{prop2} we have,
\begin{align}
1- h_\mathcal{A}(i) & \geq \mathbb{P} ( B ) = 1- \mathbb{P} ( B^c ) = 1- \mathbb{P} \left( \bigcup_{t=-(F-1)}^{1} M_t^A \right) \overset{(a)}{\geq}  1- \mathbb{P} \left( \bigcup_{t=-(F-1)}^{1} M_t \right). \nonumber
\end{align}
Step $(a)$ holds as $  M_t^A \subseteq M_t $.
\begin{align}
\therefore h_\mathcal{A}(i) & \leq \mathbb{P} \left( \bigcup_{t=-(F-1)}^{1} M_t \right) \overset{(b)}{\leq}  \sum_{t=-(F-1)}^{1} \mathbb{P} ( M_t ) = (F(i)-1)p_i \left( 1- h_{\mathcal{A}}(i) \right). \nonumber 
\end{align}
Step $(b)$ follows by applying the Union Bound.
\begin{align}
\therefore h_{\mathcal{A}}(i) & \leq \frac{(F(i)-1)p_{i}}{1+(F(i)-1)p_{i}} = h^U (i), & i\in \lbrace1,2,\ldots,n \rbrace. \nonumber 
\end{align}
\vspace{-5mm}
\end{proof}

\subsection{Proof of Theorem \ref{thmLP}}
\begin{proof}
A content $i \in \lbrace1,2,\ldots,m \rbrace$ entering the cache remains in it for the entire $F(i)$ slots and cannot be replaced by any other content during this time. This is due to the fact that it belongs to the set of $m$ most popular objects. As a result, Proposition \ref{prop1} follows and hence the hit-rates derived in Theorem \ref{thmInf} hold for the $m$ most popular objects.
\begin{align*}
\therefore h_{LP}(i) &= \frac{(F(i)-1)p_{i}}{1+(F(i)-1)p_{i}} \; ; && i\in \lbrace1,2,\ldots,n \rbrace.
\end{align*}
Also, once a content $i \in \lbrace1,2,\ldots,m \rbrace$ enters the cache, beyond $F(i)$ time slots, it is only replaced by its own fresh copy on request. This ensures that in the steady state, no content $i \in \lbrace m+1,2,\ldots,n \rbrace$ enters the cache. Therefore $ h_{LP}(i) = 0, \quad i\in \lbrace m+1,\ldots,n \rbrace.$
\end{proof}
\vspace{-5mm}
\subsection{Proof of Theorem \ref{asymptote}}
\textsc{Proof Outline}: 
\begin{itemize}
\item[1.] To show that $ \underset{n \rightarrow \infty }{\lim} \mathbb{P}_{LP}(hit) = 1$, we obtain a lower bound to $ \mathbb{P}_{LP}(hit) $ and show that it converges to $1.$ We do this by considering the system under two cases.
\item[2.]  Firstly, we show that under the scaling $ M(n)= O \left( F(n)^{\frac{1}{\beta}}\right) $, $ \mathbb{P}_{LP}(hit) = 1-O \left( \frac{1}{M(n)^{\beta-1}} \right)$. Now given $ M(n)=\omega(1) $, it follows that  $ \underset{n \rightarrow \infty }{\lim} \mathbb{P}_{LP}(hit) = 1$.
\item[3.] Secondly, we show that under the scaling $ M(n)= \omega \left( F(n)^{\frac{1}{\beta}}\right) $, $ \mathbb{P}_{LP}(hit) = 1-O \left( \frac{1}{F(n)^{\frac{\beta-1}{\beta}}} \right)$. Now, given $ F(n)=\omega(1) $, it follows that  $ \underset{n \rightarrow \infty }{\lim} \mathbb{P}_{LP}(hit) = 1$.
\end{itemize}

\begin{proof}
We use the following upper bound on Zipf distribution in our derivation: \newline
 $\displaystyle \sum_{i=1}^{n} i^{-\beta} > \int_{1}^{n+1} i^{-\beta} di \geq \frac{0.9}{\beta-1} $ for large enough n. Hence, $\displaystyle  p_{i} \leq \frac{(\beta-1)}{0.9} i^{-\beta} \quad \forall \; i $. Starting the derivation, we have
\vspace{-5mm}
\begin{align}
\mathbb{P}_{LP}(hit) = \sum_{i=1}^{M(n)} \frac{(F(n)-1) p_{i}^2}{ 1+(F(n)-1) p_i } &= \sum_{i=1}^{M(n)} \frac{(F(n)-1) p_{i}^2 + p_i - p_i}{ 1+(F(n)-1) p_i } \nonumber \\ 
&= \sum_{i=1}^{M(n)} p_i - \sum_{i=1}^{M(n)} \frac{p_i}{1+ \left(  F(n)-1 \right)p_i }. \label{PLP}
\end{align}
\vspace{-2mm}
Let $ \displaystyle A \triangleq \sum_{i=1}^{M(n)} p_i = 1-\sum_{i=M(n)+1}^{n} p_i $, then replacing $p_i$ with its upper bound yields
\vspace{-3mm}
\begin{align}
A & \geq 1- \sum_{M(n)+1}^{n} \frac{\beta-1}{0.9} i^{-\beta} = 1- \frac{\beta-1}{0.9} \int_{M(n)}^{n} i^{-\beta} \mathrm{d}i = 1- \frac{1}{0.9} \left( \frac{1}{\left( M(n) \right)^{\beta-1}} -\frac{1}{n^{\beta-1}} \right). \label{A}
\end{align}
\vspace{-2mm}
Let $ \displaystyle S \triangleq \sum_{i=1}^{M(n)} \frac{ p_i }{1+(F(n)-1) p_i } $, then replacing $p_i$ with its upper bound yields 
\begin{align}
S & \leq \sum_{i=1}^{M(n)} \frac{1}{ \frac{0.9}{\beta-1} i^\beta + \left(  F(n)-1 \right) }. \label{S}
\end{align}

\noindent Now, given that $ M(n)=\omega(1) $ and $ F(n)=\omega(1) $, we analyze the following two cases:

\textsc{Case $(i)$:} Let $ M(n)= O \left( F(n)^{\frac{1}{\beta}}\right) $. In this case, the second term in the denominator of $S$ dominates. Hence $S$ can be upper bounded as follows.
\begin{align}
S & \leq \sum_{i=1}^{M(n)} \frac{1}{ F(n)-1 } = \frac{M(n)}{F(n)-1}. \label{S1} 
\end{align}
Then from equations \eqref{PLP}, \eqref{A} and \eqref{S1} we have
\begin{align}
\mathbb{P}_{LP}(hit) &= A - S \geq 1- \frac{1}{0.9} \left( \frac{M(n)}{\left( M(n) \right)^{\beta}} -\frac{1}{n^{\beta-1}} \right) - \frac{M(n)}{F(n)-1}. \nonumber 
\end{align}
Here, since $ M(n)= O \left( F(n)^{\frac{1}{\beta}}\right) $, as $n \rightarrow \infty$ the term $\frac{M(n)}{\left( M(n) \right)^{\beta}}$ dominates $\frac{M(n)}{F(n)-1}$.

\vspace{2mm}
\noindent $ \displaystyle \therefore \mathbb{P}_{LP}(hit) = 1-O \left( \frac{1}{M(n)^{\beta-1}} \right).$ Now, given $ M(n)=\omega(1) $, it follows that  $ \underset{n \rightarrow \infty }{\lim} \mathbb{P}_{LP}(hit) = 1$. 

\vspace{2mm}
\textsc{Case $(ii)$:} Now, let $ M(n)= \omega \left( F(n)^{\frac{1}{\beta}}\right) $, then $ M(n) > F(n)^{\frac{1}{\beta}} $ for large enough $n$. Using this to split the limits of inequality \eqref{S}, we obtain
\begin{align}
S &\leq \sum_{i=1}^{F(n)^{\frac{1}{\beta}}} \frac{1}{ \frac{0.9}{\beta-1} i^\beta + \left(  F(n)-1 \right) } + \sum_{F(n)^{\frac{1}{\beta}}}^{M(n)}  \frac{1}{ \frac{0.9}{\beta-1} i^\beta + \left(  F(n)-1 \right) }. 
\end{align}
Here since $ M(n)= \omega \left( F(n)^{\frac{1}{\beta}}\right) $, as $ n \rightarrow \infty$, $F(n)-1$ dominates the denominator of the first term whereas $ \frac{0.9}{\beta-1} i^\beta$ dominates the denominator of the second term. Hence, we have 
\begin{align}
S  \leq \sum_{i=1}^{F(n)^{\frac{1}{\beta}}} \frac{1}{ F(n)-1 } + \sum_{F(n)^{\frac{1}{\beta}}}^{M(n)}  \frac{ \beta-1 }{0.9} i^{-\beta}
&= \frac{F(n)^{\frac{1}{\beta}}}{ F(n)-1 } + \frac{ \beta-1 }{0.9} \int_{F(n)^{\frac{1}{\beta}}-1}^{M(n)} i^{-\beta} \mathrm{d}i, \nonumber \\
& = \frac{F(n)^{\frac{1}{\beta}}}{ F(n)-1 } + \frac{1}{0.9}
\Bigg[ \frac{1}{ \left( F(n)^{\frac{1}{\beta}} -1 \right)^{\beta-1} } - \frac{1}{M(n)^{\beta-1}} \Bigg]. \label{S2} 
\end{align}
Then from equations \eqref{PLP}, \eqref{A}, \eqref{S2} and putting $F(n)-1 \approx F(n)$ for large $n$  we have
\begin{align}
\mathbb{P}_{LP}(hit) &\geq 1 + c_1 \frac{1}{n^{\beta-1}} - c_2 \frac{1}{ F(n)^{\frac{\beta-1}{\beta}} }, \quad \text{where} \;\; c_1= \frac{1}{0.9}  \;\; \text{and} \;\; c_2= 1+\frac{1}{0.9}. 
\label{S3}
\end{align}

\vspace{-2mm}
\noindent $ \displaystyle \therefore \mathbb{P}_{LP}(hit) = 1-O \left( \frac{1}{F(n)^{\frac{\beta-1}{\beta}}} \right).$ Now, given $ F(n)=\omega(1) $, it follows that  $ \underset{n \rightarrow \infty }{\lim} \mathbb{P}_{LP}(hit) = 1$.

\end{proof}

\subsection{Proof for Equation \eqref{Expapprox} }
\begin{proof}
Let $\overline{T}_c $ denote the number of time-slots until $m+1$ distinct contents are requested. Let $Z_1 $ denote the first request. Now, if the first request is for content $i$, \emph{i.e}, $Z_1 = i $, then $ T_c(i) = \overline{T}_c - 1 $. More formally, for $ a \in \mathbb{Z} $;
\begin{align}
\mathbb{P}(\overline{T}_c-1 > a \, \vert \, Z_1=i) &= \mathbb{P}( T_c(i) > a). \label{Result1}
\end{align}
Also, from Baye's theorem on total probability,
\begin{align}
\mathbb{P}(\overline{T}_c-1 > a ) = \sum_{i=1}^{n} \mathbb{P}(\overline{T}_c-1 > a \, \vert \, Z_1=i) \; \mathbb{P}(Z_1=i).  \nonumber
\end{align}
Substituting equation \eqref{Result1} in the above equation we have,
\begin{align*}
\mathbb{P}(\overline{T}_c-1 > a ) = \sum_{i=1}^{n} \mathbb{P}( T_c(i) > a) \; \mathbb{P}(Z_1=i). 
\end{align*}
Taking summation on both sides over $a$ and interchanging limits in the RHS gives,
\begin{align}
\sum_{a=-1}^{\infty} \mathbb{P}(\overline{T}_c > a+1 ) = \sum_{i=1}^{n} \left( \sum_{a=-1}^{\infty} \mathbb{P}( T_c(i) > a) \right) \mathbb{P}(Z_1=i). \label{Result2}
\end{align} 
Next, we simplify the following two terms from the above equation.
\begin{align*}
\sum_{a=-1}^{\infty} \mathbb{P}( T_c(i) > a) &= 1+ \sum_{a=0}^{\infty} \mathbb{P}( T_c(i) > a) = 1+ \mathbb{E}(T_c(i)), \\
\hspace{-13mm} \text{and} \; \sum_{a=-1}^{\infty} \mathbb{P}(\overline{T}_c > a+1 ) &= \sum_{b=0}^{\infty} \mathbb{P}(\overline{T}_c > b ) = \mathbb{E}( \overline{T}_c ),
\end{align*}  
where $ b=a+1 $. Therefore, equation \eqref{Result2} becomes,
\begin{align*}
\mathbb{E}( \overline{T}_c ) &= \sum_{i=1}^{n} \Big( 1+ \mathbb{E} \big( T_c(i) \big) \Big) \; \mathbb{P}(Z_1=i) = 1+ \sum_{i=1}^{n} \mathbb{E} \big( T_c(i) \big) \; \mathbb{P}(Z_1=i).
\end{align*}

\vspace{-3mm}
\end{proof}

\subsection{Proof of Theorem \ref{baum_zipf}}
\textsc{Proof Outline}: 
\begin{itemize}
\item[1.] We express the waiting time as $ T_m= \sum_{k=1}^{m} X_k $ where $ X_k$ denotes the number of draws to obtain the $k^{th}$ distinct coupon having obtained a set of $k-1$ distinct coupons. Note that under the assumption of independent coupon draws, $ \lbrace X_1, X_2, \ldots X_m  \rbrace $ are independent.
\item[2.] To prove that $ T_m \overset{i.p}{\rightarrow} m$, it is sufficient to show that $ \mathbb{P}(T_m = m) \rightarrow 1 .$ To do this, we obtain a lower bound to $ \mathbb{P}(T_m = m)$ and show that it converges to $1.$
\item[3.] To get this lower bound, we use the fact that $ T_m = m$ only if $ X_k = 1 $ for all $k.$ We then find a lower bound for $ \mathbb{P}( X_k =1)$ which results in a lower bound for $ \mathbb{P}(T_m = m).$  
\end{itemize}

\begin{proof}
Let $Z_i$ denote the $i^{th}$ distinct coupon obtained. Consider the event $ I \triangleq \lbrace Z_1= i_1, Z_2= i_2, \ldots, Z_{k-1}=i_{k-1} \rbrace$. Then $ X_k \, \vert \, I \sim \text{Geom} ( 1- \sum_{l=1}^{k-1} p_{i_l}) $. Hence, $ \mathbb{P}( X_k=1 \, \vert \, I) = 1- \sum_{l=1}^{k-1} p_{i_l} $. The unconditional probability can be given as
\begin{align*}
\mathbb{P}( X_k=1) &= \sum_{ \lbrace i_1, \ldots, i_{k-1} \rbrace = \lbrace 1,\ldots,n \rbrace } \mathbb{P}( X_k=1 \, \vert \, I) \mathbb{P}(I).
\end{align*}
where the summation runs through all $(i_1, \ldots, i_{k-1} )$ such that there exists an injection $ g : \lbrace 1,\ldots, k-1 \rbrace \rightarrow \lbrace 1, \ldots, n \rbrace $ satisfying $g(l) = i_l $ for $ l \in \lbrace 1,\ldots, k-1 \rbrace $. This summation can be lower bounded by the following inequality.
\begin{align}
\hspace{-3mm} \mathbb{P}( X_k=1) & \geq \inf_{ \lbrace i_1, \ldots, i_{k-1} \rbrace = \lbrace 1,\ldots,n \rbrace } \mathbb{P}( X_k=1 \, \vert \, I) = \inf_{ \lbrace i_1, \ldots, i_{k-1} \rbrace = \lbrace 1,\ldots,n \rbrace } 1- \sum_{l=1}^{k-1} p_{i_l}  \overset{(a)}{=} 1 - \sum_{j=1}^{k-1} p_j. \label{Xk_lower}
\end{align}
Step $(a)$ follows from the fact that the coupons are indexed in the  decreasing order of their popularities. This naturally leads to a lower bound for the waiting time as follows.
\begin{align}
\mathbb{P}(T_m = m) &= \mathbb{P}( X_k=1 ; \; \forall \; k \in \lbrace 1,2,\ldots,m \rbrace ) \nonumber \\
& \overset{(b)}{=} \prod_{k=1}^{m} \mathbb{P}( X_k=1 ) \overset{(c)}{\geq} \prod_{k=1}^{m} \left( 1 - \sum_{j=1}^{k-1} p_j  \right) \geq  \left( 1 - \sum_{j=1}^{m-1} p_j  \right)^{m-1}. \label{Tm_lower}
\end{align}
Step $(b)$ holds because $ \lbrace X_1, X_2, \ldots X_m  \rbrace $ are independent. Step $(c)$ follows from  \eqref{Xk_lower}. We now use a Lemma used in the subsequent portion of the proof.
\begin{lemma} 
For the Zipf distribution with Zipf parameter $\beta$ given by $ p_i \propto i^{-\beta} $ defined over the support $ \lbrace 1,2,\ldots,n\rbrace $, for $ m < n$ and $ \beta < 1$, 
\begin{align*}
\sum_{i=1}^{m} p_i = \theta \left( \frac{m}{n} \right)^{1-\beta}.
\end{align*}
\label{lemma1}
\end{lemma}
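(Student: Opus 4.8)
The plan is to reduce the statement to a two-sided estimate of a single truncated generalized-harmonic sum and to obtain that estimate by the integral test, exactly as in the bound already used in the proof of Theorem \ref{asymptote} (now with $\beta<1$). Writing the normalizing constant as $H_{n} \triangleq \sum_{j=1}^{n} j^{-\beta}$, so that $p_i = i^{-\beta}/H_n$, I would first record the exact identity
\[
\sum_{i=1}^{m} p_i = \frac{\sum_{i=1}^{m} i^{-\beta}}{\sum_{j=1}^{n} j^{-\beta}} .
\]
The whole lemma then follows once I show that for every integer $N \ge 1$ the partial sum $\sum_{i=1}^{N} i^{-\beta}$ is $\Theta(N^{1-\beta})$ with constants depending only on $\beta$, since I may apply this with $N=m$ to the numerator and with $N=n$ to the denominator.

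Next I would exploit that $x \mapsto x^{-\beta}$ is positive and strictly decreasing for $\beta \in [0,1)$, which yields the standard sum--integral sandwich
\[
\int_{1}^{N+1} x^{-\beta}\,dx \;\le\; \sum_{i=1}^{N} i^{-\beta} \;\le\; 1 + \int_{1}^{N} x^{-\beta}\,dx .
\]
Since $\beta<1$ the exponent $1-\beta$ is positive, so $\int_{1}^{N} x^{-\beta}\,dx = \frac{N^{1-\beta}-1}{1-\beta}$ is finite and of order $N^{1-\beta}$. Evaluating both sides and checking that the additive $\pm 1$ contributions are of lower order than $N^{1-\beta}$ produces constants $0 < c_\beta \le C_\beta$, independent of $N$, satisfying $c_\beta N^{1-\beta} \le \sum_{i=1}^{N} i^{-\beta} \le C_\beta N^{1-\beta}$.

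Combining these bounds for the numerator and the denominator and cancelling the $\beta$-dependent constants, I obtain
\[
\frac{c_\beta}{C_\beta}\left(\frac{m}{n}\right)^{1-\beta} \;\le\; \sum_{i=1}^{m} p_i \;\le\; \frac{C_\beta}{c_\beta}\left(\frac{m}{n}\right)^{1-\beta},
\]
which is precisely $\sum_{i=1}^{m} p_i = \Theta\big((m/n)^{1-\beta}\big)$. The only step demanding genuine care---rather than routine calculation---is confirming the uniformity of $c_\beta$ and $C_\beta$ across the whole range $1 \le m < n$: one must verify that the constant $+1$ in the upper bound and the $-1$ inside $N^{1-\beta}-1$ in the lower bound never dominate $N^{1-\beta}$, even at small $N$. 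Since $N^{1-\beta} \ge 1$ for all $N \ge 1$, these lower-order terms stay within a fixed multiplicative factor, so the $\Theta$-constants are indeed uniform; this mild point is the entire obstacle.
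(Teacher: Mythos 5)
Your proposal is correct and follows essentially the same route as the paper: write $\sum_{i=1}^m p_i$ as the ratio of two truncated sums $\sum_{i=1}^N i^{-\beta}$, bound each by the integral sandwich $\int_1^{N+1} x^{-\beta}\,dx \le \sum_{i=1}^N i^{-\beta} \le 1+\int_1^N x^{-\beta}\,dx$, and conclude that each is of order $N^{1-\beta}/(1-\beta)$. Your added check that the $\Theta$-constants are uniform over all $1 \le m < n$ is a welcome refinement the paper glosses over with a ``for large $n$'' remark, but it does not change the argument.
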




\begin{proof}
We begin with showing that for $\beta < 1$,  $\sum_{i=1}^{n} i^{-\beta} = \theta \left( \frac{n^{1-\beta} }{1- \beta} \right) .$

$ \displaystyle \sum_{i=1}^{n} i^{-\beta} \geq \int_{1}^{n+1} i^{-\beta} \mathrm{d}i = \frac{(n+1)^{1-\beta}-1}{1-\beta} \approx \frac{n^{1- \beta}}{1-\beta}$ for large $n$.

$ \displaystyle \sum_{i=1}^{n} i^{-\beta} \leq 1+ \int_{1}^{n} i^{-\beta} \mathrm{d}i = 1+ \frac{n^{1-\beta}-1}{1-\beta} = \frac{n^{1-\beta}-\beta}{1- \beta} \leq \frac{n^{1-\beta} }{1- \beta}.$ Hence, $ \displaystyle \sum_{i=1}^{n} i^{-\beta} = \frac{n^{1-\beta} }{1- \beta} $ for large $ n .$ This gives us $ \displaystyle \sum_{i=1}^{m} p_i = \frac{\sum_{i=1}^{m} i^{-\beta}}{ \sum_{i=1}^{n} i^{-\beta}} = \left( \frac{m}{n} \right)^{1-\beta} .$

\end{proof}

Using Lemma \ref{lemma1} in \eqref{Tm_lower} yields $ \displaystyle \mathbb{P}(T_m = m) \geq \left( 1- \left( \frac{m}{n}\right)^{1-\beta} \right)^m $. In this inequality, let the RHS be denoted by $A$, then $ \displaystyle \log A = m \log \left( 1- \left( \frac{m}{n}\right)^{1-\beta} \right) $. Now using $ \displaystyle \log x \geq 1- \frac{1}{x} $ we lower bound the \emph{log} term and simplify to obtain $ \displaystyle 0 \geq \log A \geq -\frac{m^{2- \beta}}{n^{1- \beta}} $. Now, given that $ \displaystyle m= o ( n^{\frac{1-\beta}{2-\beta}})$; $ \log A \rightarrow 0$ and hence $A \rightarrow 1$. As a result, $ \mathbb{P}(T_m = m) \rightarrow 1 .$
\end{proof}

\subsection{Proof of Theorem \ref{upper_lower_chernoff_tail}}
\textsc{Proof Outline}: 
\begin{itemize}
\item[1.] We have $ T_m= \sum_{k=1}^{m} X_i $ as before (refer to the proof of Theorem \ref{baum_zipf}). To obtain the lower tail bound, we apply Chernoff's bound on $T_m$ for a $\delta-$deviation from its mean which gives $\mathbb{P} ( T_m < \mathbb{E}(T_m)(1- \delta)) \leq \exp \big( \underset{ \lambda > 0}{\inf} f(\lambda) \big) $. Here, $f(\lambda) = d \lambda^2 - c \lambda$ with constants $c, d > 0$, which is minimized by $ \lambda_{\min}= \frac{c}{2d}$. This yields the required bound on further simplification.
\item[2.] Similarly, we obtain an upper tail bound as $\mathbb{P} ( T_m > \mathbb{E}(T_m)(1+ \delta)) \leq \exp \big( \underset{ \lambda > 0}{\inf} g(\lambda) \big) $. Here, $g(\lambda) = -a \lambda + ( e^\lambda-1) b $ with constants $a, b > 0$, which is minimized by $ \lambda_{\min}= \log\frac{a}{b}$. This yields the required bound on further simplification.
\end{itemize}

\begin{proof}
Here, for the uniform Coupon Collector's Problem, $ X_k \sim \text{Geom} (p_k) $ where $ p_k=\frac{n-(k-1)}{n}; k \in \lbrace 1,2, \ldots m\rbrace $. Let $ \mathbb{E}( X_k)$ be denoted as $ \mu_k = \frac{1}{p_k}$. It follows that $\mathbb{E}(T_m) = \sum_{k=1}^{m} \frac{n}{n-(k-1)}$ $ = n ( H_n - H_{n-m} )$ where $H_n$ denotes the $n^{th}$ harmonic number\footnote{The $n^{th}$ harmonic number is the sum of the reciprocals of the first $n$ natural numbers.}. It is known that $ H_n = \log n + \gamma +o(1) $ where $\gamma$ is known as the  \emph{Euler-Mascheroni} constant. So, for large $n$ we have $ \mathbb{E}(T_m) = n \log (\frac{n}{n-m})$. We now proceed to derive the tail bounds for the deviations of $ T_m$ about its mean.

\textsc{$(a)$ Lower tail bound:} Let $ S \triangleq T_m - \mathbb{E}(T_m)= \sum_{i=1}^{m} ( X_k - \mu_k ) $. Then for $ c > 0 $ and $ \lambda > 0$,
\begin{align*}
\mathbb{P} ( S < -c ) &= \mathbb{P}( e^{- \lambda S} \geq e^{ \lambda c}) \overset{(i)}{\leq} \frac{\mathbb{E}(e^{- \lambda S} )}{e^{ \lambda c}}  = e^{ - \lambda c} \prod_{k=1}^{m}\mathbb{E} ( e^{-\lambda(X_k - \mu_k)}).
\end{align*}

Step $(i)$ follows by using Markov's Inequality. Further, substituting $ \mu_k = \frac{1}{p_k}$ and $ \mathbb{E}(e^{-\lambda X_k}) = \frac{p_k e^{-\lambda}}{1-(1-p_k)e^{-\lambda}} $ in the above expression and rearranging the terms gives,
\begin{align*}
\mathbb{P} ( S < -c ) &= e^{-\lambda c} \prod_{k=1}^{m} \frac{e^{\frac{\lambda}{p_k}}}{1+ \frac{1}{p_k}( e^\lambda -1 )}, \\
& \overset{(ii)}{\leq} e^{-\lambda c} \prod_{k=1}^{m} \frac{e^{\frac{\lambda}{p_k}}}{1+\frac{\lambda}{p_k}} \overset{(iii)}{\leq} e^{-\lambda c} \prod_{k=1}^{m} e^{\frac{\lambda^2}{2 p_k^2}} = \exp \left( -\lambda c + \frac{1}{2} \lambda^2 \sum_{k=1}^{m} \frac{1}{p_k^2} \right) = e^{f(\lambda)}.
\end{align*}

Step $(ii)$ holds because $ e^\lambda-1 \geq \lambda, \;\; \forall \; \lambda \in \mathbb{R} $. Step $(iii)$ holds as $ \displaystyle \frac{e^x}{1+x} \leq e^{\frac{x^2}{2}}, \;\; \forall \; x \geq 0 $. Now here, $f(\lambda) = d \lambda^2 - c \lambda$ with constants $d = \frac{1}{2} \sum_{k=1}^{m} \frac{1}{p_k^2}$. Minimizing $f(\lambda)$ over $ \lambda > 0$; we obtain $ \displaystyle \mathbb{P} ( S < -c ) \leq \text{exp}( - \frac{c^2}{2 \sum_{k=1}^{m} \frac{1}{p_k^2} } ) $. We now substitute $ c= \mathbb{E}(T_m)\delta = ( n \log \frac{n}{n-m}) \delta$ for $ \delta > 0$ in this inequality to obtain
\begin{align}
\mathbb{P} \big( T_m < \mathbb{E}(T_m)(1- \delta) \big) \leq \text{exp} \left( - \frac{\log^2(\frac{n}{n-m} ) }{2 \sum_{k=1}^{m} \frac{1}{\left( n-(k-1) \right)^2 }} \: \delta^2 \right). \label{Tmeq}
\end{align}
Here, $ \displaystyle \sum_{k=1}^{m} \frac{1}{\left( n-(k-1) \right)^2 } \leq \int_{n-m}^{n} \frac{1}{k^2} \; \mathrm{d}k = \frac{m}{n(n-m)} $. Substituting this in \eqref{Tmeq} we get,
\begin{align*}
\frac{\log^2(\frac{n}{n-m} ) }{ \sum_{k=1}^{m} \frac{1}{\left( n-(k-1) \right)^2 }} &\geq \frac{\log^2(\frac{n}{n-m} ) }{\frac{m}{n(n-m)}} \overset{(iv)}{\geq} \left( 1+ \frac{(n-m)^2}{n^2} - 2 \frac{(n-m)}{n} \right) \frac{n(n-m)}{m} = m (1-\frac{m}{n}) \overset{(v)}{\approx} m .
\end{align*}

Step $(iv)$ follows by using the inequality $\log x \geq 1 - \frac{1}{x}$ and Step $(v)$ holds for large enough $n$ as $ \frac{m}{n} \rightarrow 0$. Using this lower bound in the inequality in \eqref{Tmeq} we get,
\begin{align}
\mathbb{P} \big( T_m < \mathbb{E}(T_m)(1- \delta) \big) & \leq \exp \left( - \frac{m}{2} \delta^2 \right) . \nonumber
\end{align}

\textsc{$(b)$ Upper tail bound:} For all $ \lambda > 0$,
\begin{align*}
\mathbb{P}(T_m > t) &= \mathbb{P}( e^{ \lambda T_m} \geq e^{ \lambda t}) \overset{(i)}{\leq} \frac{\mathbb{E}(e^{ \lambda T_m} )}{e^{ \lambda t}} = e^{ - \lambda t} \prod_{k=1}^{m} \frac{(1-\frac{k-1}{n}) e^\lambda}{ 1 - \frac{k-1}{n} e^\lambda }.
\end{align*}
Step $(i)$ follows by using Markov's Inequality. Now as $ m \ll n$, $ \frac{k-1}{n} \ll 1$. Hence we can use the approximation that $1-x \approx e^{-x} $ for $|x| \ll 1$.

\begin{align*}
\therefore \mathbb{P}(T_m > t) & \leq e^{ - \lambda t} \prod_{k=1}^{m} \frac{e^{-\frac{k-1}{n}}e^\lambda}{e^{-\frac{k-1}{n}e^\lambda}} = \exp \left( -\lambda t +m \lambda + ( e^\lambda-1) \sum_{k=1}^{m} \frac{k-1}{n}  \right).
\end{align*}
Now, substituting $t= \mathbb{E}(T_m)(1+ \delta) = ( n \log \frac{n}{n-m})(1+\delta)$ where $\delta > 0$ and $ \sum_{k=1}^{m} \frac{k-1}{n} = \frac{m(m-1)}{2n}$, we obtain
\begin{align}
\mathbb{P} \big( T_m > \mathbb{E}(T_m)(1+ \delta) \big) & \leq \text{exp} \big( -a \lambda + ( e^\lambda-1) b \big) = e^{ g(\lambda)}. \label{minimize} 
\end{align}
where $a= ( n \log \frac{n}{n-m})(1+\delta)-m$ and $ b = \frac{m(m-1)}{2n} .$ 
We now deduce certain properties of constants $a$ and $b$. We have $ 1 - \frac{m}{n} < e^{-\frac{m}{n}} \Rightarrow  n \log (\frac{n}{n-m}) >m \Rightarrow a > 0$. Now using $ \log x \geq 1 - \frac{1}{x}$ we have $a \geq n(1-\frac{n-m}{n})(1+\delta) - m = m \delta$ and for large $m, \; b \approx \frac{m^2}{2n} $. So $\frac{a}{b} \geq \frac{2n}{m} \delta $ which implies that $ \frac{a}{b} \rightarrow \infty$ as $ n \rightarrow \infty$. Next, minimizing $ g(\lambda) $ in \eqref{minimize} over $ \lambda > 0$; we obtain
\begin{align}
\mathbb{P} \big( T_m > \mathbb{E}(T_m)(1+ \delta) \big) & \leq \exp \Bigg( a \log \bigg( \frac{b}{a} \bigg) + a - b \Bigg) . \label{something}
\end{align}
We upper bound the log term using the inequality $ \log x \leq \frac{x-1}{\sqrt{x}}$ and simplify \eqref{something} to yield the below inequality. Here, Step $(ii)$ is justified for large $n$ as $ \frac{b}{a} \rightarrow 0.$
\begin{align*}
\mathbb{P} \big( T_m > \mathbb{E}(T_m)(1+ \delta) \big) & \leq \exp \Bigg( a- \left( 1-\frac{b}{a} \right) a \sqrt{\frac{a}{b}}  \Bigg) \overset{(ii)}{\approx} \exp \bigg(-a  \sqrt{\frac{a}{b}} \: \bigg). 
\end{align*}
\begin{align*}
\therefore \mathbb{P} \big( T_m > \mathbb{E}(T_m)(1+ \delta) \big) & \leq \exp \left( - \sqrt{\frac{n}{m}} \; \delta^{\frac{3}{2}} \right). 
\end{align*}
\vspace{-3mm}
\end{proof}

\section{Simulation Results} \label{simulation}
We simulate the caching process assuming that the content popularities obey Zipf's law with parameter $\beta$. The caching process is governed by the system parameters: number of objects $(n)$, cache-size $(m)$, Zipf parameter $(\beta)$, freshness specification $F(i)$ for content $i \in \lbrace 1,2,\ldots,n \rbrace $ and the policy being implemented. We use two freshness specification profiles: \emph{uniform} ($F(i)=F, \; \forall \; i$) and \emph{linear} ($F(i) \propto i $) where index $i$ corresponds to the $i^{th}$ most popular content. The linear profile is motivated by practical scenarios in which the contents requested often (popular) are required to be very \emph{recent} and hence tend to have smaller freshness specifications.


Firstly, we compare the performance of different policies studied in this paper with the upper bound on the performance of all caching policies given by \eqref{Pupper}. Figures \ref{varybetafull} and \ref{varybetafull2} plot the steady-state hit probability against varying $\beta$ for the \emph{uniform} and \emph{linear} freshness profiles respectively. Figures \ref{varybetafull}$(b)$ and \ref{varybetafull2}$(b)$ provide zoomed in plots of Figures \ref{varybetafull}$(a)$ and \ref{varybetafull2}$(a)$ respectively for values of $\beta$ from $0$ to $0.5$. We infer that for low values of $\beta$, LEH outperforms all other policies. Here, we have not plotted LP and LRU policies since it is evident from Figures \ref{varybetafull}$(a)$ and \ref{varybetafull2}$(a)$ that they are outperformed by the M-LP and M-LRU policies respectively.  

Figure \ref{varyF} compares caching performance under \emph{uniform} freshness profile by varying parameter $F$. We observe that LEH performs better than M-LP and M-LRU policies over moderate values of $F$ and otherwise is comparable to them. 

Secondly, we depict the accuracy of the LRU approximation in Figures \ref{hits_varybeta_varyF}$(a)$ and \ref{hits_varybeta_varyF}$(b)$ which correspond to hit-rate variations with respect to $\beta$ and $F$ respectively for selected content indices: 1, 10 and 100. Hit-rates are obtained by simulating the caching process for sufficiently long runs to ensure their high accuracy. We infer that the theoretical approximations match the simulations reasonably well. We notice that the M-LRU hit-rates also adhere closely to the LRU approximation. This suggests that the LRU performance is unaffected by cache redundancies and is comparable to M-LRU performance for small cache-size and uniform freshness demands.

\begin{figure*}[t!]
\begin{minipage}[t]{0.5\linewidth}
\centering
\includegraphics[width=\linewidth]{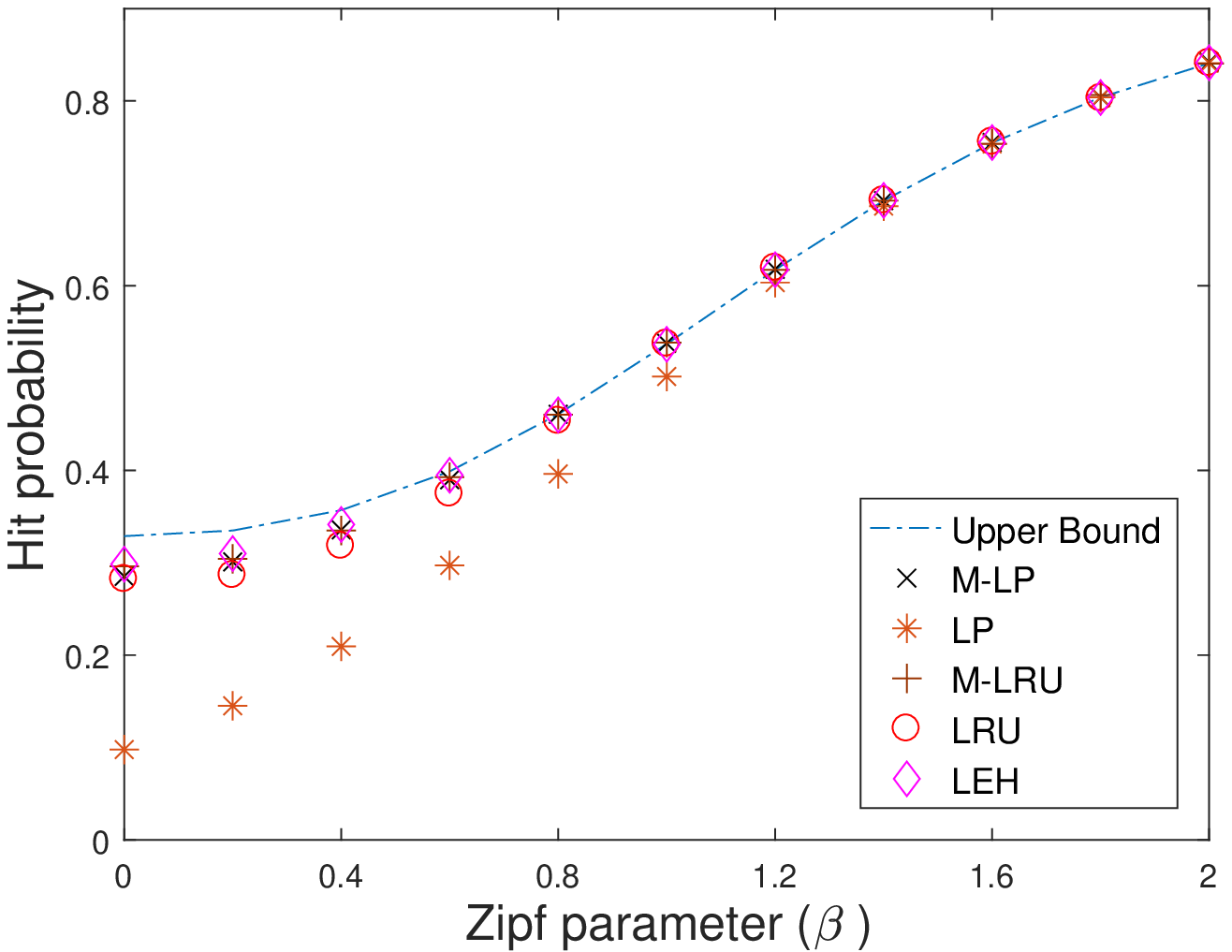}
 \caption*{(a)}
\end{minipage}
\hfill
\begin{minipage}[t]{0.5\linewidth}
\centering
\includegraphics[width=\linewidth]{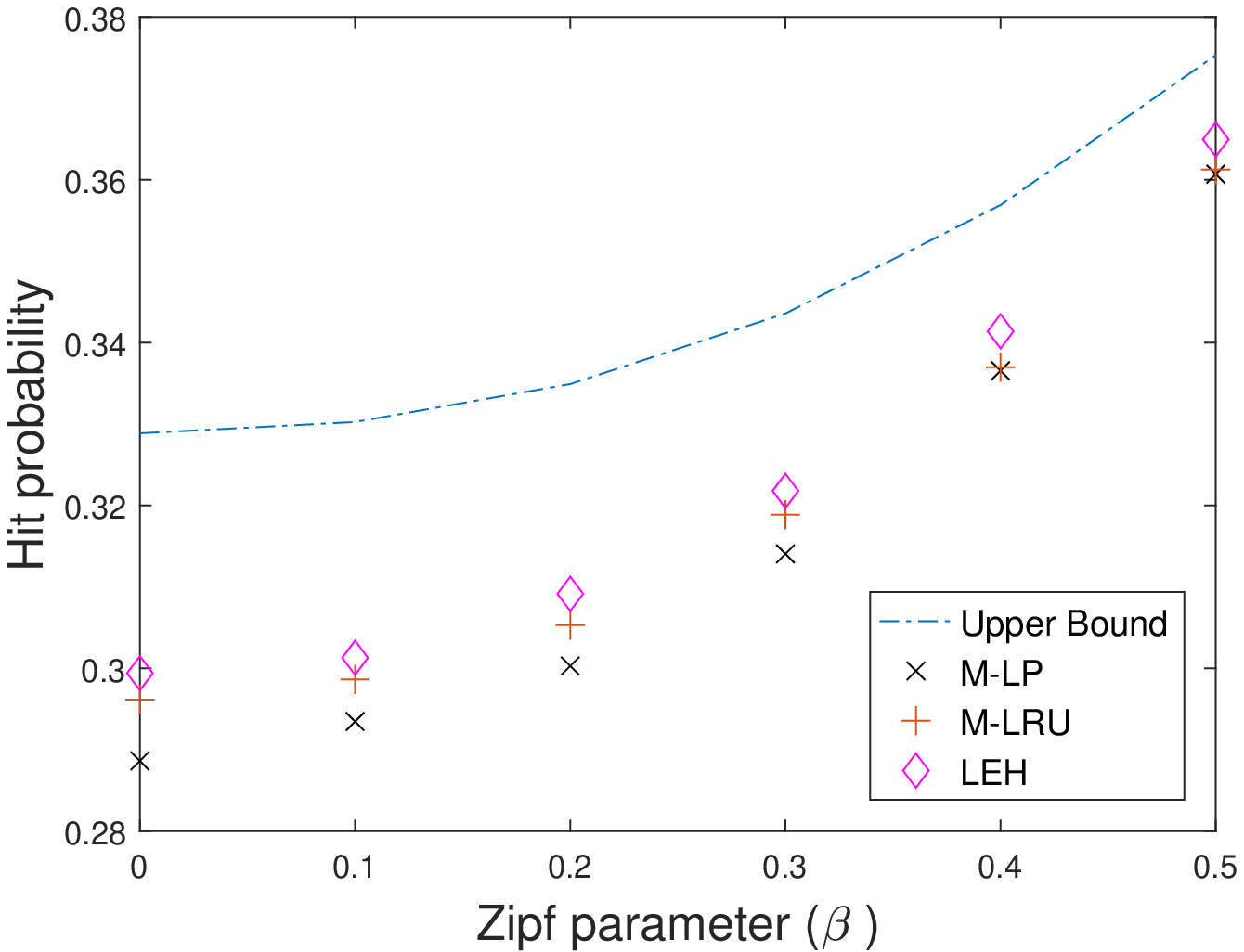}
 \caption*{(b)}
\end{minipage}
\caption{Performance of caching policies as a function of Zipf parameter $\beta$ for $ n=100$, $m=30 $ and $F=50$.}
\label{varybetafull} 
\end{figure*}

\begin{figure*}[t!]
\begin{minipage}[t]{0.5\linewidth}
\centering
\includegraphics[width=\linewidth]{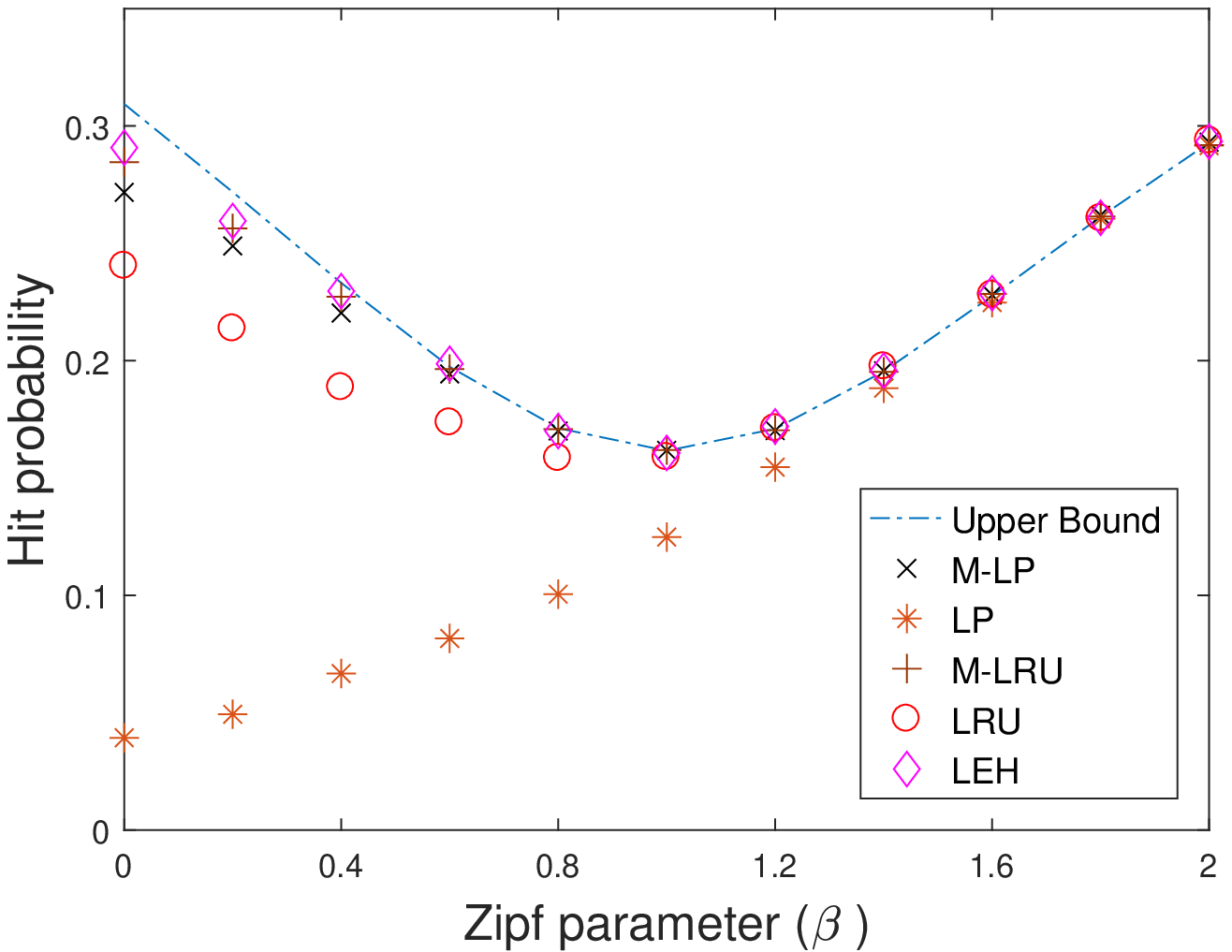}
 \caption*{(a)}
\end{minipage}
\hfill
\begin{minipage}[t]{0.5\linewidth}
\centering
\includegraphics[width=\linewidth]{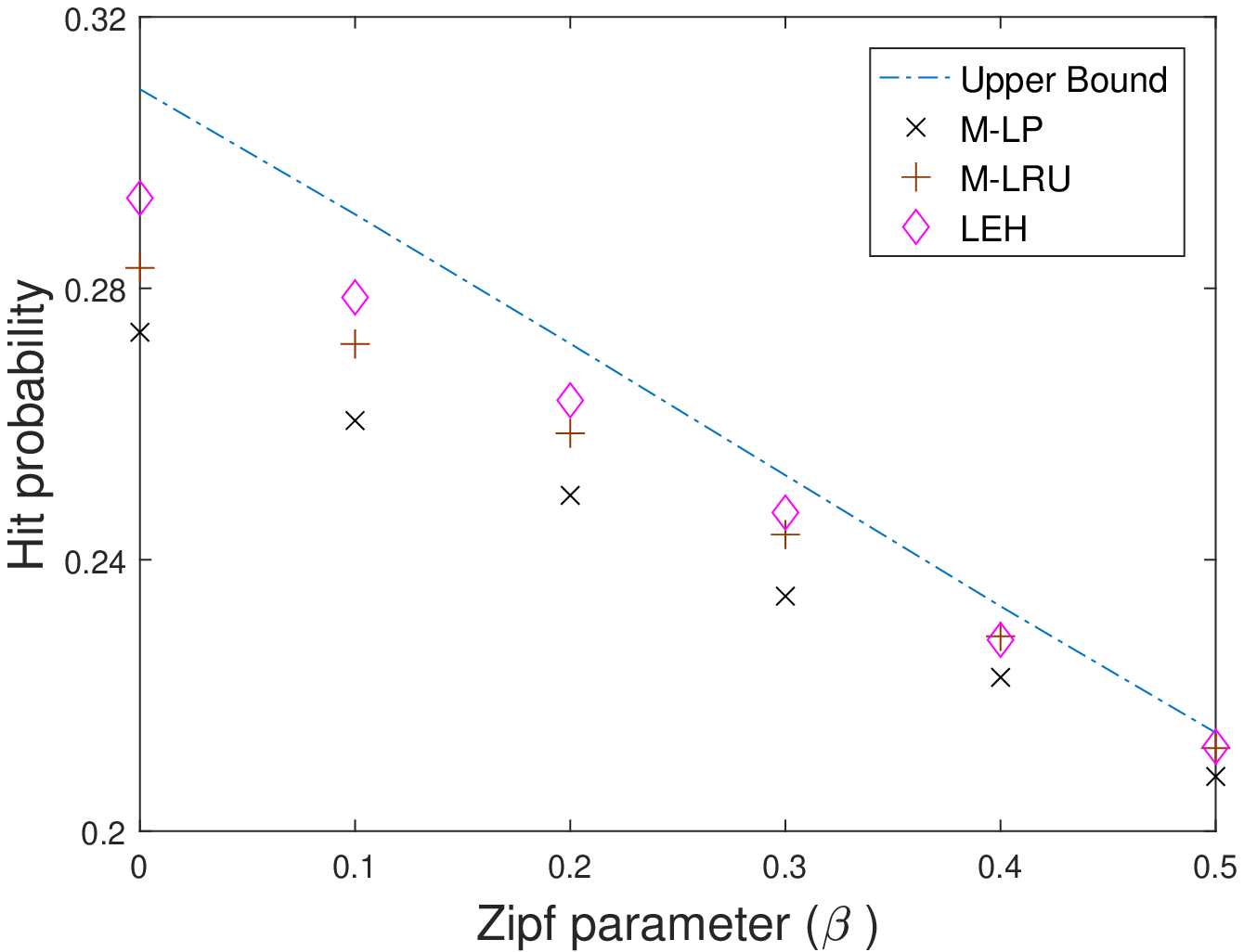}
 \caption*{(b)}
\end{minipage}
 \caption{Performance of caching policies as a function of Zipf parameter $\beta$ for $ n=100$, $m=30 $ and $F(i)=1+i, \; i \in \lbrace 1,2,\ldots n \rbrace $.}
\label{varybetafull2} 
\end{figure*}

\begin{figure}[h!]
\centering
\includegraphics[width=0.5\linewidth]{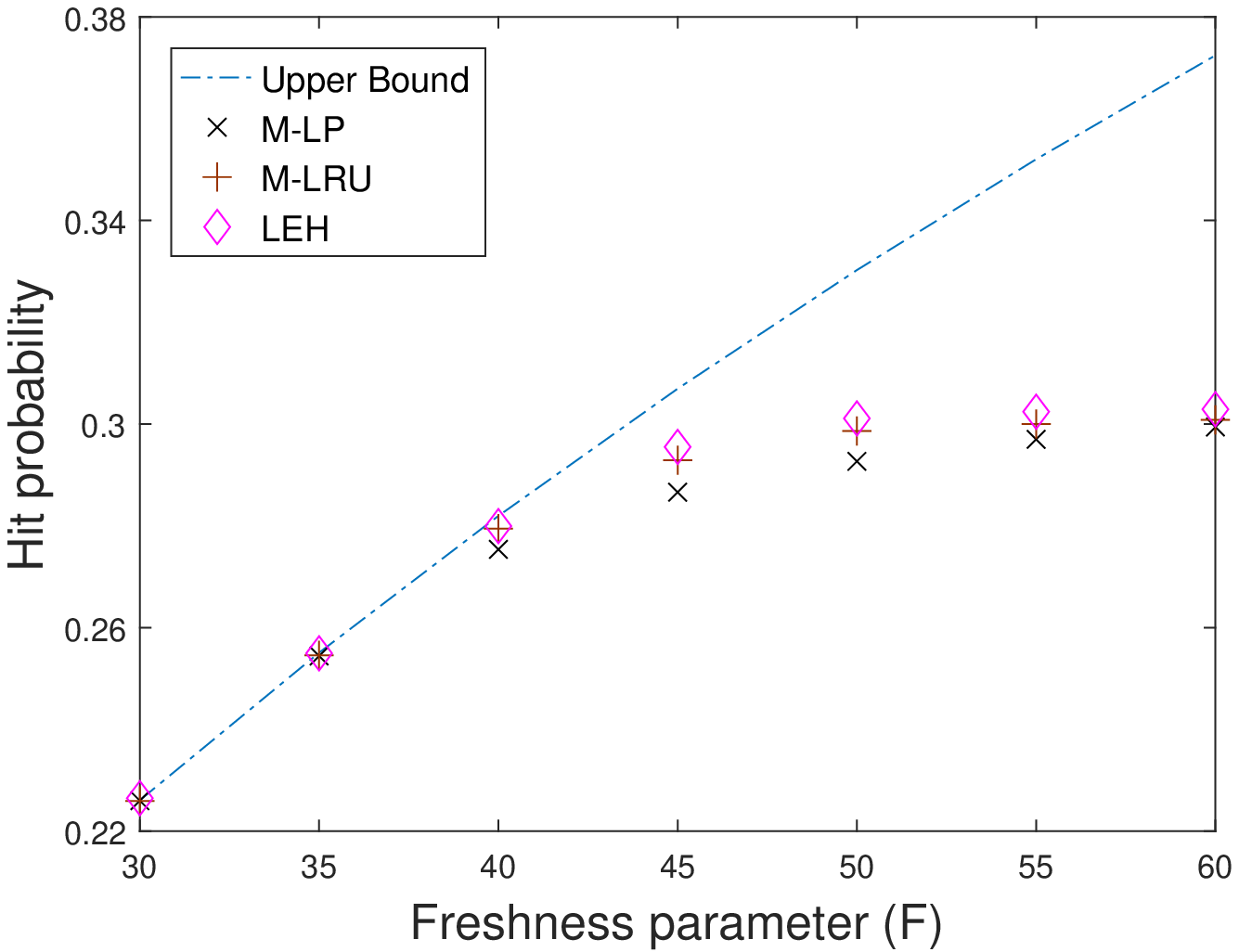}
 \caption{Performance of caching policies as a function of freshness specification $F$ for $ n=100$, $m=30$ and $\beta=0.1 $.}
\label{varyF} 
\end{figure}

\begin{figure*}
\begin{minipage}[b]{0.5\linewidth}
\centering
\includegraphics[width=\linewidth]{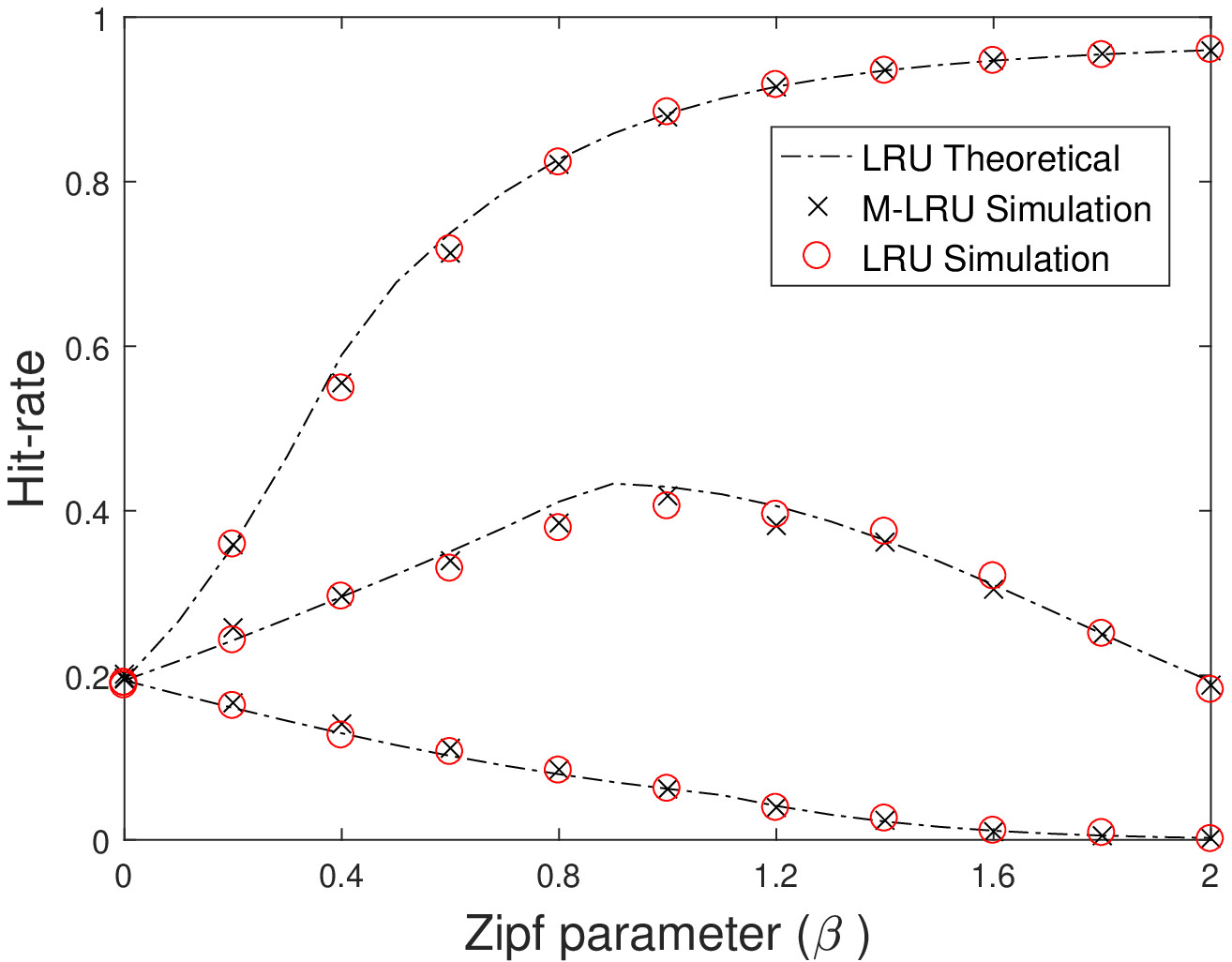}
 \caption*{\text{(a})}
\end{minipage}
\begin{minipage}[b]{0.5\linewidth}
\includegraphics[width=\linewidth]{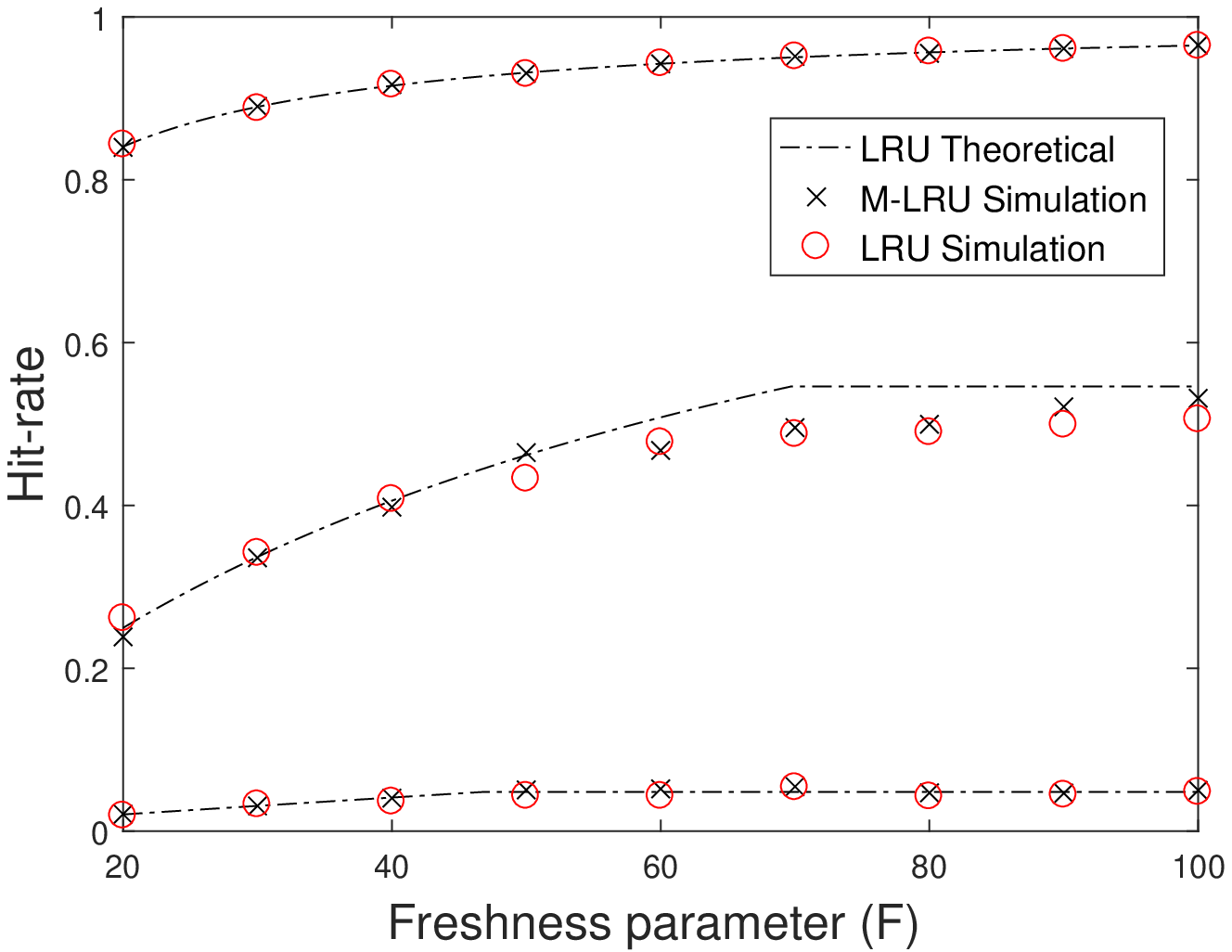}
\caption*{(b)}
\end{minipage}
\caption{Hit-rate against (a) Zipf parameter $\beta$ for contents 1, 10 and 100. $n=100$, $m=20$ and $F=40$ and (b) freshness parameter $F$ for contents 1, 10 and 100. $n=100$, $m=20$ and $\beta=1.2.$} 
\label{hits_varybeta_varyF}
\end{figure*} 

\section{Concluding Remarks}  \label{conclusion}
In this paper, we studied caching policies under the setting in which the users impose freshness constraints on the requested content. First, we quantified the optimal performance that a caching policy may achieve, subject to these constraints, in the form of of a universal upper bound. We then obtained content-wise hit-rates for the LP policy. For a practical scenario with Zipf distributed requests, we proved that as the number of contents ($n$) increases, the LP policy asymptotically attains the optimal performance as long as the cache-size also increases with $n$. Next, we obtained an accurate approximation for the LRU content-wise hit-rates for large $n$. To achieve this, we associated the problem of estimating the \emph{characteristic time} of a content in the LRU policy with the classical Coupon Collector's Problem. We provided analytical results in the form of tight concentration bounds on the characteristic time to justify the accuracy of our approximations. Further, we improved the LP and LRU policies by exploiting the knowledge of the freshness specifications to eject \emph{stale} data from the cache and thereby removed redundancies that inhibit cache performance. Finally, we proposed a probabilistic algorithm (LEH policy) which enabled us to prioritize the contents by accounting for both their freshness specifications and popularities. We verified through extensive simulations that this algorithm indeed improves the caching performance and fairs better than the other policies considered.

\iftoggle{SINGLE_COL}{\vspace{-1.0cm}}{}


\end{document}